\def \OO {\mathrm{O}}
\def \oo {\mathrm{o}}
\def\EE{{\mathbb{E}}}
\def\PP{{\mathbb{P}}}
\newtheorem{theorem}{Theorem}
\newtheorem{lemma}{Lemma}
\newtheorem{proposition}{Proposition}
\newtheorem{corollary}[theorem]{Corollary}
\newtheorem{remark}{Remark}
\newenvironment{proof}{{\bf Proof:}}{\hfill\rule{2mm}{2mm}}
\newcommand{\remove}[1]{}
\begin{document}
	%
	\title{Resource Pooling in Large-Scale \\Content Delivery Systems}

	\author{Kota Srinivas Reddy, Sharayu Moharir and Nikhil Karamchandani \\
		Department of Electrical Engineering \\
		Indian Institute of Technology, Bombay \\
		Email: ksreddy@ee.iitb.ac.in, sharayum@ee.iitb.ac.in, nikhilk@ee.iitb.ac.in
	}
	\maketitle
	\begin{abstract}
		Content delivery networks are a key infrastructure component used by Video on Demand (VoD) services to deliver content over the Internet. We study a content delivery system consisting of a central server and multiple co-located caches, each with limited storage and service capabilities. This work evaluates the performance of such a system as a function of the storage capacity of the caches, the content replication strategy, and the service policy. This analysis can be used for a system-level optimization of these design choices. 
		
		The focus of this work is on understanding the benefits of allowing caches to pool their resources to serve user requests. We show that the benefits of resource pooling depend on the popularity profile of the contents offered by the VoD service. 
		More specifically, if the popularity does not vary drastically
		across contents, then resource pooling leads to an order wise reduction in central server transmission rate as the system size grows. On the other hand, if the content popularity is skewed, the central server transmission rate is
		of the same order with and without resource pooling.
		%
	\end{abstract}
	\begin{IEEEkeywords}
		Content replication strategies, performance analysis, resource pooling
	\end{IEEEkeywords}
	%
	\IEEEpeerreviewmaketitle
	\section{Introduction}\label{sec:introuction}
{\let\thefootnote\relax\footnote{Preliminary versions of this work appeared in  \cite{moharir2017content} and  \cite{reddy2017resource}.  This work was supported in part by  a SERB grant on ``Content Caching and Delivery over
		Wireless Networks" and seed grants from IIT Bombay.}}
{
The popularity of Video on Demand (VoD) services like YouTube \cite{Youtube} is ever increasing. It is predicted that VoD services will account for over 81\% of all the Internet traffic by 2022 \cite{Cisco1}. Most popular VoD services use distributed content delivery networks to serve their customers. In this work, we study a distributed content delivery network with multiple caches deployed in a geographical area (see Figure \ref{fig:cache_cluster}). Content is delivered to the users either by these caches or by a common root node, which is connected to the central server that stores the entire content catalog offered by the VoD service. As discussed in \cite{borst2010distributed, moharir2017content}, this model captures the setting where the ISP, represented by the root node, uses the distributed local caches to serve user requests and thus help reduce communication with the core network represented by the central server. This cache cluster can also be a part of a larger tree network \cite{borst2010distributed}. 

Most popular VoD services have massive content catalogs and serve a large number of users. Motivated by this, we study a time-slotted system where a batch of requests arrives in each time-slot. Each request is for a content from the  catalog offered by the VoD service. The system uses the caches to serve as many of these requests as possible, and the remaining requests are directed to the central server. The goal is to design a placement and service policy to minimize the number of contents which need to be fetched from the central server. 

The design choices in such systems include dimensioning the cache storage resources, optimizing content replication on the caches, designing policies for routing and serving user requests. This work develops a model to enable a system-level optimization of these design choices. 

\begin{figure}[t]
	\begin{center}
		\includegraphics[scale=0.33]{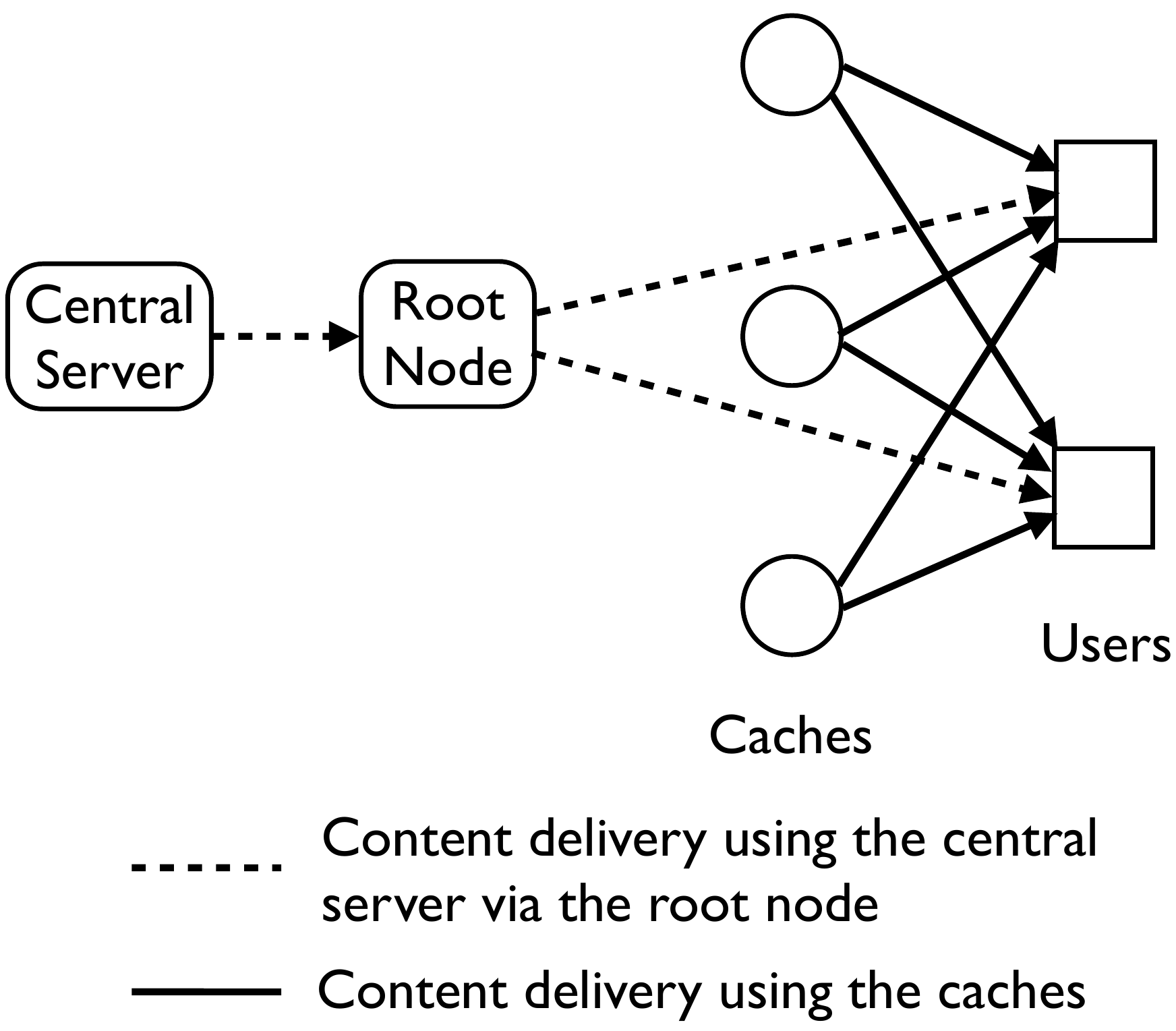}
		\caption{\sl An illustration of a cache cluster consisting of three caches serving two users. Each user can either be served by the caches or by the central server via the root node. \label{fig:cache_cluster}}
	\end{center}
\end{figure}

Recent works on content replication strategies in content delivery systems focus on the setting where each user request is served by only one cache and each cache serves only one request at a time \cite{SGSS14, LLM12,LLM13, Whitt07, XT13,maddah2014fundamental, maddah2013decentralized, pedarsani2014online, niesen2014coded, hachem2014multi, zhang2015coded, shanmugam2013femtocaching, borst2010distributed, moharir2017content}. In a departure from these works, we explore the benefits of relaxing this constraint by allowing caches to pool their resources, $i.e.$, allowing a request to be served by multiple caches. \textit{We refer to this phenomenon of multiple caches pooling their resources to serve a request as ``resource pooling".} (Simply,) in resource pooling, different parts of the requested file can be delivered to the user by different caches. While resource pooling can enhance performance, it comes at the cost of an increase in coordination overheads, thus motivating the need to achieve the desired performance with as little resource pooling as possible. To characterize the benefits of limited resource pooling, we limit the number of requests each cache can serve concurrently and evaluate the performance of the system as a function of this limit. 

The key takeaway of this work is that the benefits of resource pooling vary drastically with the popularity profile of contents. More specifically, we show that when popularity is comparable across contents, even a small amount of resource pooling leads to a huge improvement in performance. In contrast, if content popularity is lopsided, the benefits of resource pooling are very limited. 

\section{Related Work}
\label{section:related_works}
Content caching has a rich and varied history, see for example~\cite{wessels2001}
and references therein. More recently, it has been studied in the
context of video-on-demand systems for which efficient content placement
schemes have been proposed in~\cite{borst2010distributed, tan2013optimal} among others. See \cite{paschos2018role} for a recent overview of various challenges in content caching and delivery networks, and a summary of known results. Due to limited space, we mention here only those works which are closest to our setting.


Motivated by studies like \cite{krishnan2013video}, which observe that users of VoD services are delay intolerant, we focus on the setting where requests are never queued and each request is served immediately, either by the caches or the central server. In the studies of \cite{SGSS14, LLM12, LLM13, Whitt07, XT13} the focus is on the setting where each request can be served by any one cache and the central server communicates with each user separately. The focus in the studies of \cite{LLM12, LLM13, Whitt07} is on the setting where content popularity is known, whereas the studies of \cite{SGSS14, XT13} focus on the setting where content popularity is unknown. In the optimal caching policies proposed in these works, the number of caches storing a file is a non-decreasing function of file popularity. In this work, we see that this is not necessarily true for our setting where requests arrive in a batch and the central server communicates with the multiple users simultaneously via an error free broadcast link.


The setting where each user is pre-matched to a cache and the central server communicates with the users via an error free broadcast link has been studied recently in \cite{maddah2014fundamental, maddah2013decentralized, pedarsani2014online, niesen2014coded, hachem2014multi, zhang2015coded}. The key intuition derived in these studies is that content should be placed in the caches not only to provide local access but to also help generate coded-multicasting opportunities in the delivery phase which can serve multiple user requests simultaneously. It is also shown that exploiting such coding opportunities in the delivery phase is necessary for optimal performance.

Our setting differs from the two settings discussed above as each request can be served by multiple caches (resource pooling) and the central server communicates with the multiple users simultaneously via an error free broadcast link.  Unlike our setting,  the studies in \cite{shah2014performance, shah2015high} characterize the benefits of resource pooling in the setting where jobs are allowed to be queued at the caches. In addition, the studies of \cite{shah2014performance, shah2015high} focus on the case where all contents are equally popular, whereas we allow for more general popularity profiles. 
In a preliminary version of this work, we showed that, for this setting, coding caching is not always necessary for optimal performance \cite{moharir2017content}. Refer to \cite{moharir2017content} for a detailed discussion on the difference in nature of the optimal caching policies for the settings discussed thus far. In this work, we generalize the setting studied in \cite{moharir2017content} by allowing requests to be served by multiple caches and letting each a cache serve multiple users simultaneously. In \cite{reddy2018effects}, we also characterize the effect of storage heterogeneity in distributed cache systems.

The rest of the paper is organized as follows. Section \ref{sec:setting} briefly describes our problem setting.  Sections \ref{section:notations} and
\ref{section:preliminaries} describe some useful notations and preliminaries. Sections \ref{sec:results} and \ref{sec:simulations} describe our theoretical  results and simulation results. Section \ref{sec:conclusions} summarizes our paper and Section \ref{sec:proof} gives the proofs of our results mentioned in Section \ref{sec:results}.


	\section{setting} \label{sec:setting}

We study a system consisting of a central server, and multiple caches with limited storage as well as limited service capabilities. The system offers a content catalog consisting of $n$ contents\footnote{Throughout the paper, we will use `content' and `file' interchangeably to denote individual elements of the catalog.} of equal size (say $1$ unit = $b$ bits), where the number of contents ($n$) and the number of caches ($m$) are of  the same order ($i.e.$,  $n=mc$, for some constant $c>0$). Users make requests for various contents from the catalog, which have to be served using the caches and the central server. 

The system operates in two phases: the \textit{placement phase} and the \textit{delivery phase}.  During the placement phase, each cache stores content related to the $n$ files in the catalog. After the placement phase has concluded, the system moves to the delivery phase in which a batch of requests arrives and has to be allocated to the caches for service. While we allow the splitting of files into parts, unlike \cite{maddah2014fundamental, zhang2015coded}, we restrict our attention to \textit{uncoded policies, which do not employ any coding in either placement phase or delivery phase.}

We are interested in the asymptotic performance of this system as $n$, $m$ $\rightarrow$ $\infty$.
\subsection{Storage Model} \label{storage}
The central server stores the entire catalog of $n$ contents offered by the content delivery system and each of the $m$ caches has the capacity to store $k$ units of data. As mentioned before, we allow files to be split into smaller parts and caches to store a subset of the parts of any file.
\subsection{Request Model} \label{request}
In each time-slot, requests arrive in batches of size $r =\rho m$, for some constant $0<\rho<1$. Each request is generated according to an independent and identically distributed process where the probability of the requested content being Content $i$ be denoted by $p_i$. We analyze the performance of the system when the $p_i$'s satisfy the Zipf distribution which is defined as follows: the fraction of requests for the $i^{\text{th}}$ most popular content is proportional to $i^{-\beta}$, where $\beta\geq 0$ is a constant, known as the Zipf parameter. This choice is motivated by the fact that empirical studies of many VoD services have shown that the content popularity distributions match well with the Zipf distribution \cite{liu2013measurement,BC99,YZ06,fricker2012impact}. As the value of $\beta$ increases, the content popularity profile becomes more lopsided. Typical values of  $\beta$ lie between 0.6 and 2 (\cite{liu2013measurement,BC99,YZ06,fricker2012impact}). 
%
%
\subsection{Service Model} \label{service}
All the user requests have to be served jointly by the caches and the central server. Every user request is assigned to one or more caches, each of which uses its stored content to provide various parts of the requested file.  Due to hardware, power and/or bandwidth constraints, the user-cache assignment needs to satisfy two restrictions: each cache can only serve up to $a$ requests  and the total data served by a cache should not exceed $1$ unit. There is no restriction on the number of caches that serve a particular request. 

The root node can also enlist the help of the central server to assist with serving the user requests. Some requests are served by the caches. To serve the remaining requests, the central server transmits the requested files or parts thereof to the root node, which then forwards them to the users.   See Figure \ref{fig:cache_cluster} for an illustration. Using the data received from the assigned caches and the central server, each user should be able to reconstruct its requested file. Refer to Figure \ref{fig:example} for an example. 

\subsection{Goal} \label{goal}
The reason for deploying local caches is that they can help reduce the communication on the bottleneck link between the central server and the root node. Our goal in this paper is to design placement and delivery schemes which minimize the expected transmission rate of the central server needed to satisfy all the user requests, where the expectation is with respect to the popularity distribution of the user requests. Note that if a file needs to be sent by the central server via the root node to more than one users in a batch, the central server transmits it to the root node only once. In order to achieve this objective, we utilize the knowledge of the content popularity profile to design appropriate storage and service policies. 

\begin{figure}[t]
	\begin{center}
		\includegraphics[scale=0.33]{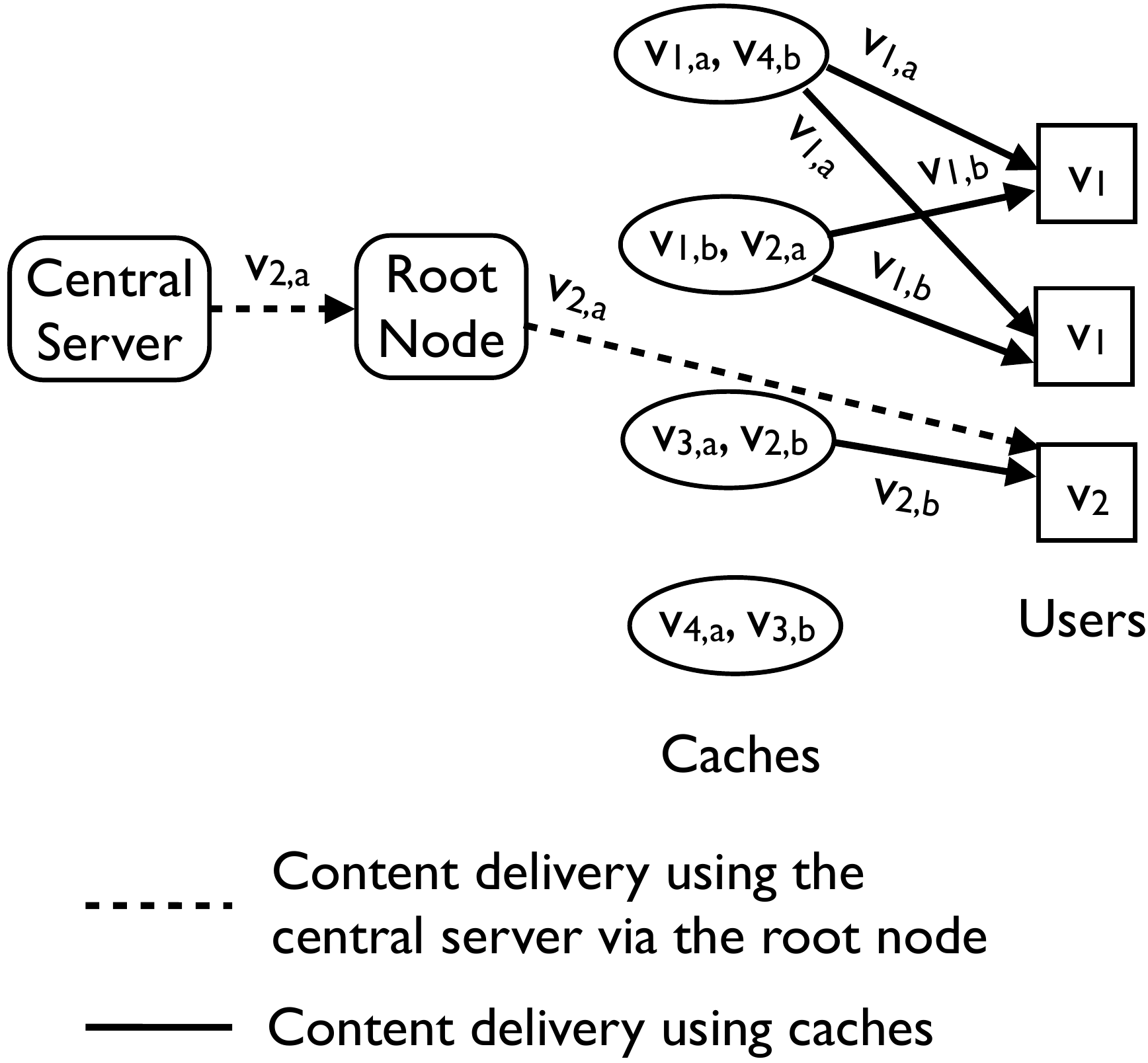}
		\caption{\sl An illustration of a cache cluster consisting of four caches serving three users. The catalog consists of 4 files, $v_i$, $i=\{1,2,3,4\}$, each of which is divided into two equal parts ($v_{i,a}$, $v_{i,b}$) to store on the caches. Each cache can serve upto two requests ($a=2$) as long as the total data delivered by each cache is not more than 1 file. The first two users request file $v_1$ and are served by the first two caches. The third user requests file $v_2$ and receives the first part of the file from the central server and the second part from the third cache. The third user cannot be served by the second cache even though it stores the requested file as that will violate the total data output constraint of that cache. \label{fig:example}}
	\end{center}
\end{figure}

 {
\subsection{Contributions}
The main goal of this work is to analyze the impact of resource pooling on the performance of the content caching and delivery system described above. In particular, we propose efficient placement and delivery schemes for our setting and characterize the variation of the central server transmission rate with the resource pooling parameter $a$ which denotes the number of requests that each cache can serve  simultaneously.  Recall our assumption that the requests follow a Zipf distribution with parameter $\beta$; we find that the impact of resource pooling on the server transmission rate is qualitatively very different for the cases of $0\leq\beta < 1$ and $1<\beta<2$. The former corresponds to the case where the popularity is comparable across contents, whereas the latter represents a scenario where the content popularity profile is lopsided with a few very popular contents. We deal with these two cases separately.
%

\textit{$0\leq\beta < 1$}: We extend the proportional placement and optimal matching delivery scheme proposed in \cite{LLM12} for $a=1$ to the case of resource pooling with $a > 1$. The scheme splits each file into $a$ equal-sized sub-files, creates copies of each sub-file in proportion to its popularity and then stores them across the caches in the system. The delivery procedure splits each file request into $a$ sub-requests, one for each of its $a$ sub-files and then matches as many sub-requests as possible to caches hosting the corresponding sub-files, while ensuring that no cache is assigned to more than $a$ requests. The unmatched sub-requests are served directly by the central server. Theorem~\ref{thm:zipf01} presents an upper bound on the expected server transmission rate of the proposed scheme as well as a lower bound on the performance of any uncoded policy for this setting. In particular, we show that the expected server transmission rate of the proposed scheme decays exponentially with $ak$, i.e., the product of the resource pooling parameter $a$ and the storage capacity per cache $k$. Thus, if the popularity follows the Zipf distribution with parameter $0\leq\beta < 1$, a small amount of resource pooling can lead to a significant reduction in the transmission rate of the central server. As a corollary of this result, we find that for $k \ge 1$ and each cache serving a number of requests growing only logarithmically in the total number of requests, i.e., $a = \Omega(\ln n)$, all the user requests can be served with just a vanishing expected server transmission rate. 
%

\textit{$1<\beta<2$}: For this setting, our proposed placement scheme is based on the solution to an appropriate fractional Knapsack problem \cite{goodrich2006algorithm} which specifies the number of copies of each content to be stored across the various caches. The delivery scheme matches requests to appropriate caches one by one, starting from the least popular files. As before, the unmatched requests are served directly by the central server. Theorem~\ref{thm:zipf12u} provides an upper bound on the expected server transmission rate for our scheme. Comparing this to the lower bound for any uncoded policy presented in Theorem~\ref{thm:zipf12}, we conclude that if content popularity follows the Zipf's distribution with $1<\beta<2$, our Knapsack Storage + Match Least Popular policy (KS+MLP) is order-optimal. Furthermore, as a corollary, we find that in most cache memory regimes, the order-optimal rate can be achieved with $a = 1$, i.e., without any resource pooling. Thus, if the popularity follows the Zipf distribution with parameter $\beta \in (1,2)$, there is very limited order-wise benefit of resource pooling ($a > 1$). This is in contrast to the case of $0\leq\beta < 1$. 

Note that the assumptions made in this paper are commonly used in the existing literature for technical simplicity. Please see for example \cite{LLM12,LLM13}. These papers consider a similar setting without resource
	pooling (i.e., $a = 1$ case). The policies proposed in our paper can be used for more general settings. In particular, (\emph{i}) the assumption that the all files are equal size is often made for analytical tractability and a possible solution for unequal file sizes is splitting files into segments of equal size, (\emph{ii}) the assumption that the  number of caches and number of contents have same order makes our analysis simple. Our analysis can be
	easily extended to the setting where the number of contents is of a higher order. 
	In \cite{reddy2018effects}, we relaxed this assumption for the setting without resource pooling, (\emph{iii}) the homogeneous assumptions on the caches (i.e., storage and service capabilities of all caches are equal) is also to make our analysis simple. Our policies also work for heterogeneous cache sizes. In \cite{reddy2018effects}, we study the effect of storage heterogeneity  for the setting without resource pooling. 
	
	Due to practical limitations, like hardware, power and/or bandwidth constraints, there exists a limit on the amount of data that can be sent from a cache to the users in each time-slot. We have therefore imposed an upper limit on the output data rate of each cache and normalized it to $1$ unit per time-slot. Serving a request via multiple caches increases the synchronization overheads. Hence, we have imposed an upper limit on the number of caches used to serve a request as $a$. Studying the effect of heterogeneous service capabilities and the effect of coded policies is a promising direction of study and is beyond the scope of this work. 

\remove{

1) I think we need a better figure, which illustrates the user requests coming in, their matching to various caches and the role of the central server sending a message to the root. 

2) It's not clear what kind of strategies we are allowing. I think we need to somehow mention here for example that we allow caches to store pieces of files, instead of full files. Similarly, requests can also broken down into requests for pieces etc. In addition, how the message of the central server reaches the end users is not clear. Is it directly to the users, or via the root node, or via the caches? Perhaps the figure can help with this as well, looking at the current figure I don't see any broadcasting and so if we use the term `broadcast' it might be confusing. 

3) Should we indicate somewhere that we are assuming that the contents of the cache remain static while new demands come in each slot. This models a slow changing catalogue and popularity distribution.

4) One example to illustrate storage and service policy and transmission rate.
}

	\section{Notations}
\label{section:notations}
We use the notations mentioned in Table \ref{notati} and definitions in Table \ref{definat} in the rest of this paper. 
	\begin{table}[h]
				\begin{center}
			\begin{tabular}{|c|c|}
				\hline
				\textbf{Symbol} & \textbf{Meaning} \\
				\hline
					\hline
				$m$&number of caches\\
				\hline
				$n$&number of files\\ 
				\hline
				$r$ & number of requests \\ 
				\hline
				$k$ & storage capacity of each cache\\
				\hline 
				$a$ & maximum number of requests \\ & each cache can serve\\
				\hline
				$\beta$ & Zipf parameter \\
				\hline 
				$b$ & File size (1 unit = $b$ bits)\\
				\hline
				$p_i$ & the request probability for File  $i$ \\
				\hline
			\end{tabular}
			\vspace{0.1cm}
			\caption{Notations}\label{notati}
				\end{center}
			\end{table}
		\begin{table}[h]
			\begin{tabular}{|c|c|}
				\hline
				\textbf{Notation} & \textbf{Definition}\\
				\hline
				\hline
				$f(n)=\oo\big(g(n)\big)$&  $\lim\limits_{n\rightarrow\infty}\frac{f(n)}{g(n)}\rightarrow 0$ \\
				\hline 
				$f(n)=\OO\big(g(n)\big)$&$\exists C \text{ s.t. } \lim\limits_{n\rightarrow\infty}\frac{f(n)}{g(n)}\leq C$ \\
				\hline 
				$f(n)=\Theta\big(g(n)\big)$&$ \exists C_1, C_2 \text{ s.t. } C_1\leq \lim\limits_{n\rightarrow\infty}\frac{f(n)}{g(n)}\leq C_2$\\
				\hline 
				$f(n)=\Omega\big(g(n)\big)$&$\exists C \text{ s.t. } \lim\limits_{n\rightarrow\infty}\frac{f(n)}{g(n)}\geq C $\\
				\hline 
				$f(n)=\omega\big(g(n)\big)$&$  \lim\limits_{n\rightarrow\infty}\frac{f(n)}{g(n)}\rightarrow \infty$\\
				\hline  
			\end{tabular}
		\vspace{0.1cm}
		\caption{Definitions}\label{definat}
	\end{table}
%
%

\section{Preliminaries}
\label{section:preliminaries}
Our proposed scheme is based on the solution to the fractional Knapsack problem \cite{goodrich2006algorithm}, which can informally be defined as follows: choose items to keep in the knapsack such that the cumulative value of the items is maximized, while ensuring that the cumulative weight of the items is not more than the knapsack's capacity. Formally, if the total capacity of the knapsack is $W$, item $j$ has value $v_j$ and weight $w_j$, the fractional knapsack problem is defined as:
\begin{eqnarray*}
	\max_{\{x_j\}} \displaystyle \sum_{j=1}^J x_j v_j \hspace{0.23in} \text{ s.t. } \displaystyle \sum_{j=1}^J x_j w_j \leq W, \hspace{0.125in} \& \hspace{0.125in} 0 \leq x_j \leq 1, \text{ } \forall j.
\end{eqnarray*}

\noindent Without loss of generality, let the items be indexed in decreasing order of value to weight ratio, i.e., 
 $
\frac{v_1}{w_1} \geq \frac{v_2}{w_2} \geq ... \geq \frac{v_J}{w_J}
$.

\noindent Let $j^*$ be such that
	$\sum_{j=1}^{j^*-1} w_j \leq W, \text{ and } \sum_{j=1}^{j^*} w_j > W.$ 
The solution to the fractional Knapsack problem is:
\begin{eqnarray*}
	x_j = \begin{cases}
		1, & \text{for } j < j^*,  \\
		\dfrac{W - \sum_{j=1}^{j^*-1} w_j}{w_{j^*}}, & \text{for } j = j^*,\\
		0 & \text{otherwise}.
	\end{cases}
\end{eqnarray*}
\begin{remark}
	The time complexity of the fractional Knapsack problem solution is  $\OO(J \log J)$. 
\end{remark}


\section{Main results and discussion} \label{sec:results}
In this section we state and discuss our main results. We relegate the proofs to Section \ref{sec:proof}. 

\subsection{Zipf distribution with $\beta \in [0,1)$}
\label{zipf} 
We first state our results for the case where content popularity follows the Zipf distribution (defined in Section \ref{sec:setting}) with parameter $\beta \in [0,1)$.  We propose a storage/service policy for this setting and evaluate its performance.

Our storage policy is inspired by the  Proportional Placement (PP) policy proposed in \cite{LLM12}. We divide each file into $a$ sub-files of equal size. Note that the popularity of a sub-file is the same as the popularity of the corresponding file as a whole. The number of caches storing each sub-file  is proportional to its popularity.  We ensure that no cache stores more than one sub-file of the same file.  
 
Our service policy is as follows: we treat each request for a file as $a$ sub-requests, one for each of the $a$ sub-files and create a bipartite graph $G(V_1,V_2, E)$, where $V_1$ is the set of sub-requests, $V_2$ is the set of caches, and $E$ is the set of edges. There is an edge between  $v_1\in V_1$ and $v_2 \in V_2$ if Cache $v_2$ can serve Sub-request $v_1$, $i.e.$, if it stores a copy of the requested content. We construct a new set of nodes $V_2^{(a)}$ which contains $a$ copies of each node in $V_2$ and find the maximum cardinality matching between the set of requests ($V_1$) and the set of caches ($V_2^{(a)}$).
All the sub-requests matched to a copy of $v_2 \in V_2^{(a)}$ are served by Cache $v_2$ and all the sub-requests which are not matched to any cache are served by the central server via the root node. We refer to this service policy as the Optimal Matching Routing (OMR) policy. Note that this policy satisfies our service constraints on caches: ($\emph{i}$) each cache can only serve up to $a$ requests  and ($\emph{ii}$) the total data served by a cache should not exceed $1$ unit. We refer to this scheme as the Proportional Placement + Optimal Matching Routing (PP+OMR) policy. 
\begin{theorem} \label{thm:zipf01}
Consider a system with $n$ files with popularity following the Zipf distribution with parameter $\beta \in [0,1)$, and $m=n/c$ ($c>0$, $c$ is a constant) caches of size $k $ units each. Every cache can serve at most $a$ requests and the total data served by a cache cannot exceed $1$ unit. The system receives a batch of  $r=\rho m $ ($0<\rho<1$, $\rho$ is a constant) i.i.d. requests.\\
(a) Let $R_{z_1}$ be the central server's transmission rate for our 
 \indent \hspace{0.5ex} policy described above. Then, 
	\begin{align*}
	\mathbb{E}[R_{z_1}]
	= \begin{cases}
	\OO(n)& \text{if } k < c, \text{ } \forall a, \\
	\OO\big( \min\{n,nk\exp (-c_1ak)\}\big) &\text{if } k \geq c, 
	\end{cases}
	\end{align*}
	\hspace{3ex} where $c_1$ is a constant\footnote{independent of $n$, $a$, $k$}, and is greater than zero. \\
	(b) Let $R^*_{z_1}$ be the central server's transmission rate for the  \indent \hspace{0.5ex} optimal  uncoded policy. Then, 
	\begin{eqnarray*}
		\mathbb{E}[R^*_{z_1}]
		= \begin{cases}
			\Omega(n)&\text{if } k < c - \Theta(1), \text{ } \forall a, \\
			\Omega\big(n\exp({-c_2ak\ln{ak}})\big) &\text{if } k \geq c,
		\end{cases}
	\end{eqnarray*}
\hspace{3ex}  where $c_2$ is a constant  and is greater than zero.
\end{theorem}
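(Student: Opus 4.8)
The plan for part (a) is to split on the memory regime. When $k<c$ the bound is immediate: the server never transmits more than the set of distinct requested files, so $R_{z_1}\le r=\rho m=\OO(n)$ surely. For $k\ge c$, note that $n=mc\le mk$, so the storage budget suffices to keep at least one full copy of every file; hence the Proportional Placement can be arranged so that each of the $a$ sub-files of file $i$ is replicated on $d_i$ caches with $d_i\ge 1$ and $d_i=\Theta(mkp_i)$ for the bulk of the catalogue. Since OMR computes a maximum matching, it suffices to analyse any convenient feasible matching. Writing $R_i$ for the amount the server sends on behalf of file $i$, we have $\EE[R_{z_1}]=\sum_i\EE[R_i]$ and, since $R_i\le 1$, $\EE[R_i]\le\PP[\text{file }i\text{ is not fully served by the caches}]$. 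The crux is a lemma bounding this probability by $C\exp(-c_1\,a\min\{d_i,\Theta(k)\})$, which I would prove by a Hall/K\"onig deficiency argument on $G(V_1,V_2^{(a)},E)$: a sub-request for sub-file $(i,j)$ can be unserved only if the set $N_{ij}$ of cache-copies that can serve it is saturated, and $|N_{ij}|=a\,d_i$ because each of the $d_i$ caches holding that sub-file contributes $a$ copies in $V_2^{(a)}$; the total demand landing on $N_{ij}$ has mean $\approx\rho\,a\,d_i<a\,d_i$ (the per-cache service load is $a\rho$), so a Chernoff bound gives overflow probability $\exp(-\Theta(a\,d_i))$. This is exactly where the $a$-fold resource-pooling gain enters — with $a=1$ one only gets $\exp(-\Theta(k))$ for the bulk files. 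Summing over $i$, the popular files are negligible and the $\Theta(n)$ ``bulk'' files (those with $i=\Theta(n)$, hence $d_i=\Theta(k)$) contribute $\OO(nk\exp(-c_1ak))$; combined with the surely-true $\OO(n)$ bound this yields the stated minimum.

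For part (b) with $k<c-\Theta(1)$ I would use a counting argument that ignores the service constraints entirely. For any placement let $s_i\in[0,1]$ be the measure of the union of the stored fragments of file $i$; then $\sum_i s_i\le mk<n-\Theta(m)$, so $\sum_i(1-s_i)=\Omega(n)$. Whenever file $i$ is requested the server must transmit at least $1-s_i$ units of it (that content is physically absent from every cache), and these amounts add over distinct files; for Zipf with $\beta\in[0,1)$ one has $rp_i\ge rp_n=\Theta(1)$ for every $i$, so $\PP[\text{file }i\text{ requested}]\ge 1-e^{-\Theta(1)}=\Omega(1)$, giving $\EE[R^*_{z_1}]\ge\Omega(1)\sum_i(1-s_i)=\Omega(n)$. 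For part (b) with $k\ge c$ the deficiency argument is vacuous, so the bound must come from demand fluctuations colliding with the service limits. The key observation is that a single cached copy of a $1/a$-sized sub-file can be forwarded to at most $a$ distinct users (it outputs at most $1$ unit and serves at most $a$ requests), so if every sub-file of file $i$ lies on $d_i$ caches then the caches can fully serve file $i$ only when $n_i\le a\,d_i$. Thus, for any placement — which by symmetry may be taken with $d_{ij}=d_i$ and $\sum_i d_i\le mk$ — we get $\EE[R^*_{z_1}]\ge\sum_i\PP[n_i>a\,d_i]$, and hence $\EE[R^*_{z_1}]$ is at least the minimum of this sum over all such $\{d_i\}$, a fractional-Knapsack/water-filling problem. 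Since each bulk file has $\EE[n_i]=\Theta(1)$, pushing its binomial upper tail below a level $\delta$ costs $d_i\gtrsim\ln(1/\delta)/(a\ln\ln(1/\delta))$ by the Poisson-type tail estimate, so the budget $mk$ can cover at most $\OO\big(mka\ln\ln(1/\delta)/\ln(1/\delta)\big)$ of the $\Theta(n)$ bulk files; choosing $\delta=\exp(-c_2\,ak\ln(ak))$ with $c_2$ large enough leaves $\Theta(n)$ bulk files with tail probability at least $\delta$, which gives $\EE[R^*_{z_1}]=\Omega(n\exp(-c_2\,ak\ln(ak)))$.

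The step I expect to be the main obstacle is the matching lemma in part (a): extracting the exponent $a\,d_i$ (hence $ak$ for the bulk files) rather than merely $d_i$ requires combining, in one clean argument, the $a$-fold replication of caches in $V_2^{(a)}$, the disjointness of the supports of the $a$ sub-files, the load/concentration estimate, and a union bound that stays polynomial by restricting attention to ``popularity-threshold'' bottleneck sets in K\"onig's theorem. A secondary difficulty is the bookkeeping in part (b) for $k\ge c$: coupling the ``at most $a$ requests'' and ``at most $1$ unit'' per-cache constraints with the split into $a$ sub-files so that the per-bulk-file coverage cost is $\Theta\big(\ln(1/\delta)/(a\ln\ln(1/\delta))\big)$, which is what produces the $ak\ln(ak)$ exponent and, in contrast to the $\ln$-free $ak$ of part (a), accounts for the (deliberate) gap between the upper and lower bounds.
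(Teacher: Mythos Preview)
Your treatment of the small-memory cases in both (a) and (b) matches the paper's. The real divergence is in the $k\ge c$ regimes.

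\textbf{Part (a), $k\ge c$.} Your Hall/K\"onig argument has a genuine gap at the step ``the total demand landing on $N_{ij}$ has mean $\approx\rho\,a\,d_i$''. The relevant demand is the number of sub-requests adjacent to $N_{ij}$, i.e.\ sub-requests for \emph{any} sub-file stored on one of those $d_i$ caches; its mean is $r\sum_{l\in L}p_l$ with $L=\bigcup_{s}\partial s$, and under proportional placement (where popular sub-files sit on many caches) this is governed by $\sum_i p_i^2$, not by $\rho$. For Zipf with $\beta\in(1/2,1)$ the per-cache integer demand is $\Theta(a\,n^{2\beta-1})\gg a$, so the event ``$N_{ij}$ is saturated'' is not even rare and your Chernoff step cannot yield $\exp(-\Theta(a d_i))$. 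Your parenthetical ``the per-cache service load is $a\rho$'' is true only for the \emph{fractional} load $\sum_{l\in\partial s}b_l/d_l$, and that distinction is exactly what breaks the argument.

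The paper sidesteps this entirely: it builds the explicit fractional assignment that gives each of the $d_l$ caches holding sub-file $l$ a share $1/d_l$ of every sub-request for $l$, so cache $s$ carries $\sum_{l\in\partial s}b_l/(a d_l)$ with mean $\rho$ independently of $\beta$. A single MGF/Chernoff bound (using $p_l\ge(1-\beta)/n$) gives $\PP(\text{overload})\le\exp(-c_1 ak)$; total unimodularity turns fractional feasibility into an integral matching; and broadcasting the $k$ units on each overloaded cache yields $\EE[R_{z_1}]\le mk\exp(-c_1 ak)$ directly, with no per-file analysis and no K\"onig bottleneck-set bookkeeping.

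\textbf{Part (b), $k\ge c$.} Your water-filling over $\{d_i\}$ is in the right direction but is both more elaborate than needed and missing a justification: you implicitly assume the adversarial placement splits files into $1/a$-sized pieces (so each stored piece can be forwarded to $a$ users), yet an arbitrary uncoded placement need not do this. The paper handles this by first relaxing to a dominating system in which every cache may serve up to $a$ requests \emph{per stored content} (no global cap), which upper-bounds what any fragment size can achieve. After that, a one-line pigeonhole replaces your optimization: total storage $mk$ forces at least $n-m/2$ units to be replicated at most $2k$ times, hence served at most $2ak$ times in the relaxed system; the Poisson-type lower tail $\PP(b_i>2ak)\gtrsim\lambda^{2ak+1}/(2ak+1)!$ with $\lambda=rp_n=\Theta(1)$ then gives the $\exp(-c_2\,ak\ln(ak))$ exponent immediately via Stirling.
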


From the first part of the theorem we conclude that the performance of our policy depends on the product $ak$, where $k$ is the number of files each cache can store and $a$ is the number of requests each cache can serve simultaneously. As expected, the performance of our policy improves with increasing cache memory. In addition, for a fixed amount of cache memory, the performance of our policy can be improved by increasing resource pooling. The second part of the theorem gives a lower bound on the expected transmission rate of the central server under any uncoded storage/service policy which satisfies the assumptions in Section \ref{sec:setting}.

\begin{corollary}\label{cor:zipf_0to1}
	Consider a system with $n$ files with popularity following the Zipf distribution with parameter $\beta \in [0,1)$, and $m=n/c$ ($c>0$, $c$ is a constant) caches of size $k(\geq c) $ units each. Every cache is capable of  serving at most $a$ requests and the total data served by a cache cannot exceed $1$ unit. The system receives a batch of  $r=\rho m $ ($0<\rho<1$, $\rho$ is a constant) i.i.d. requests.
	\begin{enumerate}
		\item If $ak=\OO((\ln n)^{\alpha})$ for $\alpha<1$,  $\mathbb{E}[R_{z_1}^*]=\omega(1)$.
		\item If $ak = \Omega\big(\ln (n)\big)$, then for our storage/service policy, $\mathbb{E}[R_{z_1}]=\oo(1)$. 
	\end{enumerate}
\end{corollary}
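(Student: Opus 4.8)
The plan is to read off both statements directly from Theorem~\ref{thm:zipf01}: the hypothesis $k\ge c$ puts us in the second (non-trivial) branch of both part~(a) and part~(b), so it only remains to substitute the assumed growth rate of $ak$ into those bounds and simplify. For the first claim I would use the lower bound $\mathbb{E}[R^*_{z_1}] = \Omega\big(n\exp(-c_2\,ak\ln(ak))\big)$ and show that the exponent is $o(\ln n)$ when $ak=\OO((\ln n)^\alpha)$ with $\alpha<1$. For the second claim I would use the upper bound $\mathbb{E}[R_{z_1}] = \OO\big(nk\exp(-c_1\,ak)\big)$ and show the right-hand side is $\oo(1)$ once $ak$ grows at least logarithmically.

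\emph{Part 1.} Put $x:=ak$; the hypothesis gives $x\le C(\ln n)^\alpha$ for some constant $C$ and all large $n$. If $x\ge 1$, then $x\ln x\le C(\ln n)^\alpha(\ln C+\alpha\ln\ln n)$, and since $\alpha<1$ we have $(\ln n)^\alpha\ln\ln n=\oo(\ln n)$ (because $(\ln n)^{\alpha-1}\ln\ln n\to 0$), so $ak\ln(ak)=\oo(\ln n)$; if $x<1$ the quantity $x\ln x$ is bounded and the same conclusion is immediate. Hence $c_2\,ak\ln(ak)=\oo(\ln n)$, so $n\exp(-c_2\,ak\ln(ak))=\exp\big((1-\oo(1))\ln n\big)\to\infty$. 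By Theorem~\ref{thm:zipf01}(b), $\mathbb{E}[R^*_{z_1}]=\Omega\big(n\exp(-c_2\,ak\ln(ak))\big)=\omega(1)$.

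\emph{Part 2.} By Theorem~\ref{thm:zipf01}(a), $\mathbb{E}[R_{z_1}]=\OO\big(\min\{n,nk\exp(-c_1\,ak)\}\big)\le \OO\big(nk\exp(-c_1\,ak)\big)$. Without loss of generality $k\le n$ (a cache never needs to store more than the entire catalog), so $nk\exp(-c_1\,ak)\le n^2\exp(-c_1\,ak)$, i.e.\ its logarithm is at most $2\ln n-c_1\,ak$. Since $ak=\Omega(\ln n)$, we have $ak\ge\gamma\ln n$ for some constant $\gamma>0$ and all large $n$, so this logarithm is at most $(2-c_1\gamma)\ln n$; taking $\gamma$ large enough that $c_1\gamma>2$ (the constant furnished by the $\Omega(\ln n)$ growth in the regime of interest) makes $nk\exp(-c_1\,ak)\to 0$, hence $\mathbb{E}[R_{z_1}]=\oo(1)$.

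The argument has no substantive obstacle: Theorem~\ref{thm:zipf01} already carries all the analytic content, and what remains is asymptotic bookkeeping. The only point needing mild care is that the polynomial prefactors ($n$ in part~1, $nk$ in part~2) must not overwhelm the (sub)exponential factor --- in part~1 this is precisely why the hypothesis is $\alpha<1$ rather than $\alpha\le1$, and in part~2 it is why $ak$ should exceed a sufficiently large multiple of $\ln n$ so that $c_1\,ak$ dominates $\ln(nk)$.
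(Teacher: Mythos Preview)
Your proposal is correct and matches the paper's approach: the paper gives no separate proof of the corollary, treating both parts as immediate consequences of Theorem~\ref{thm:zipf01}, and your write-up supplies exactly the asymptotic bookkeeping one would use to make that explicit. Your closing caveat about Part~2 is well placed --- the conclusion $\mathbb{E}[R_{z_1}]=\oo(1)$ really does require the implicit constant in $ak=\Omega(\ln n)$ to exceed a threshold depending on $c_1$ (and on how $k$ grows), so what you flag is a looseness in the corollary's statement rather than a gap in your argument.
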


We conclude that for $ak = \OO((\ln n)^{\alpha})$ for $\alpha<1$, no uncoded storage/service policy can bring the expected transmission rate of the central server down to zero.  In addition, for $k \geq c$, $ak = \Omega(\ln n)$ is sufficient to ensure that with high probability, all requests are served by the caches under our storage/service policy. We thus conclude that if the popularity follows the Zipf distribution with parameter $\beta \in [0,1)$, a small amount of resource pooling can lead to a significant reduction in the transmission rate of the central server.


\subsection{Zipf distribution with $\beta \in (1,2)$}\label{zipfb}

We now focus on the case where content popularity follows the Zipf distribution with parameter $\beta \in (1,2)$.  The following proposition gives a lower bound on the expected transmission rate from the central server for a slightly less restricted system than the one mentioned in Section \ref{sec:setting}. Hence, Proposition \ref{prop:zipf12} provides a lower bound for our system as well. 

\begin{proposition}
	\label{prop:zipf12}
	Consider a distributed cache system with $n$ contents { each of size $b_i$ bits}, $m$ caches of size $k$ units each, and a batch of $r$ requests arriving at the beginning of each time-slot. Each request is generated according to an i.i.d. process, and the request probability for Content $i$ is denoted by $p_i$. Let $R^*_{\text{NC}}$ denotes the minimum transmission rate required to serve all requests arriving in a batch using uncoded storage and service policies, under the constraint that any cache can serve upto $a$ requests for each of its stored content\footnote{In our system, the total number of served requests across all stored content in a cache is at most $a$}. Then, we have that,
	\begin{eqnarray*}
		\EE[R^*_{\text{NC}}] &=& 
		\Omega\bigg(\displaystyle \sum_{i=1}^n b_i (1-(1-p_i)^{r}) - \text{O}^*\bigg), \\
		\text{where, O}^* &=& \max_{\{x_{i,u}\}} \displaystyle \sum_{i=1}^n \displaystyle \sum_{u=1}^{b_i} x_{i,u} (1-(1-p_i)^{r}) \\
		&& \text{s.t. }  \displaystyle \sum_{i=1}^n \displaystyle \sum_{u=1}^{b_i} x_{i,u} \max\bigg\{\Big\lfloor\frac{rp_i}{a}\Big\rfloor,1 \bigg\}  \leq mkb, \\
		&& 0 \leq x_{i,u} \leq 1, \text{ } \forall i,u.
	\end{eqnarray*}

\end{proposition}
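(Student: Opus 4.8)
The plan is to lower bound $\EE[R^*_{\text{NC}}]$ by relating the number of distinct requested contents that survive (i.e., cannot be served by the caches) to a relaxed optimization problem. First I would observe that since policies are uncoded and files are split into bits, it suffices to track, for each content $i$ and each bit position $u \in \{1, \dots, b_i\}$, how many caches store that particular bit. Let $y_{i,u}$ denote the number of caches storing bit $u$ of Content $i$; the total storage constraint gives $\sum_{i}\sum_{u} y_{i,u} \leq mkb$ (each cache holds $kb$ bits). The key combinatorial point is that if bit $u$ of Content $i$ is requested in a batch (which happens with probability $1-(1-p_i)^r$ that at least one user wants Content $i$), then that bit can be served by the caches only if enough cache-copies exist to absorb the demand; since each cache can serve at most $a$ requests for a stored content, a content requested by $D_i$ users needs at least $\lceil D_i / a \rceil$ cache-copies of each of its bits to be fully served locally. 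Taking expectations and using $\EE[\lceil D_i/a\rceil] \geq \max\{\lfloor rp_i/a\rfloor, 1\}$ up to constants (conditioned on the content being requested at all), the ``effective cost'' of making bit $u$ of Content $i$ fully local is at least $\max\{\lfloor rp_i/a\rfloor,1\}$ units of storage.

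Next I would set up the counting: the expected total size of requested content is $\sum_i b_i(1-(1-p_i)^r)$, and the expected size of content that can be served locally is at most the value of the knapsack-type relaxation where $x_{i,u} \in [0,1]$ is a fractional indicator that bit $u$ of Content $i$ is served locally, each such bit ``costs'' $\max\{\lfloor rp_i/a\rfloor,1\}$ toward the budget $mkb$, and each contributes $(1-(1-p_i)^r)$ to the locally-served expected size. This is exactly $\text{O}^*$ as defined in the statement. The server must transmit at least the difference, so $\EE[R^*_{\text{NC}}] \geq \EE[\text{requested size}] - \EE[\text{locally served size}] \geq \sum_i b_i(1-(1-p_i)^r) - \text{O}^*$, modulo the $\Omega(\cdot)$ absorbing the constant factors from the ceiling/floor estimates and from the concentration of $D_i$ around its mean. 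I would make the probabilistic step rigorous by a conditioning argument: condition on the realized demand vector $(D_1, \dots, D_n)$, apply the deterministic storage/matching bound, then take expectations and use convexity of $d \mapsto \lceil d/a\rceil$ together with a lower-tail bound (or simply Jensen in the appropriate direction, since $\EE[\lceil D_i/a\rceil \mathbbm{1}\{D_i \geq 1\}]$ is at least a constant times $\max\{\lfloor rp_i/a\rfloor, 1\}$).

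The main obstacle I expect is the probabilistic bookkeeping in passing from the random demand $D_i$ to the deterministic quantity $\max\{\lfloor rp_i/a\rfloor,1\}$ inside the constraint: one needs that, with enough probability, $D_i$ is comparable to its mean $rp_i$ so that the required number of cache-copies is genuinely $\Omega(rp_i/a)$, and simultaneously that the ``$+1$'' term is captured (any requested content needs at least one copy of each bit). Handling the regime where $rp_i/a$ is small (so the $\max$ is dominated by $1$) versus large requires splitting the content index set and arguing separately, and one must ensure the relaxed linear program's optimum $\text{O}^*$ genuinely upper bounds the expected locally-served volume for \emph{every} uncoded policy, not just ``nice'' ones — this is where allowing the less-restricted service model (each cache serves up to $a$ requests \emph{per stored content}) in the proposition statement, rather than $a$ requests total, makes the deterministic bound clean and the LP relaxation tight enough to state.
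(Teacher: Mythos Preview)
Your overall architecture matches the paper's proof: reduce to a per-bit dichotomy (either a bit is replicated on at least $\max\{\lfloor rp_i/a\rfloor,1\}$ caches, or it is effectively ``not stored'' for order-wise purposes), then relax the resulting $\{0,1\}$ knapsack to its fractional version to obtain $\text{O}^*$. That is exactly what the paper does.

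Where your write-up diverges from the paper is the probabilistic step, and here your proposed tools are not the right ones. You suggest using ``convexity of $d\mapsto\lceil d/a\rceil$'' together with Jensen; but $\lceil d/a\rceil$ is a staircase, not convex, so Jensen does not apply, and in any case what is needed is not a bound on $\EE[\lceil D_i/a\rceil]$ but a lower bound on $\PP(D_i>ay_{i,u})$ when $y_{i,u}$ is below the threshold. The paper handles this with a one-line median argument (their Lemma~3): for $D_i\sim\mathrm{Bin}(r,p_i)$ with $rp_i\ge 1$, $\PP(D_i\ge\lfloor rp_i\rfloor)\ge 1/2$. Hence if a bit of Content~$i$ is stored on fewer than $\lfloor rp_i\rfloor/a$ caches, it must be fetched from the server with probability at least $1/2$, which is within a constant factor of $1-(1-p_i)^r$ in that regime; this is the constant absorbed by the $\Omega(\cdot)$. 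In the complementary regime $rp_i<a$, the $\max$ equals $1$ and ``below threshold'' simply means $y_{i,u}=0$, so the transmission probability is exactly $1-(1-p_i)^r$. Replacing your Jensen/convexity sketch with this median-of-binomial argument closes the gap and makes the reduction to the fractional knapsack rigorous.
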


%

The quantity $\text{O}^*$ defined in Proposition \ref{prop:zipf12} is the solution to the fractional Knapsack problem described in Section \ref{section:preliminaries} with:

\begin{itemize}
	\item[--] The value of Bit $u$ of Content $i$, $$v_{i,u}=1-(1-p_i)^{r},$$  is the probability that Content $i$ is requested at least once. 
	\item[--] The weight of Bit $u$ of Content $i$, $$w_{i,u} = \max\bigg\{\Big\lfloor\frac{rp_i}{a}\Big\rfloor,1 \bigg\} ,$$ where $rp_i$ is the expected number of requests for Content $i$ in a time-slot. 
	\item[--] The capacity of the knapsack, $$W = mkb,$$ is the total memory of the $m$ caches. 
\end{itemize}
$x_{i,u}=1$ implies that $\max\Big\{\Big\lfloor\frac{rp_i}{a}\Big\rfloor,1 \Big\} $ copies of Bit $u$ of Content $i$ are stored in the knapsack, and, $x_{i,u}=0$ implies that Bit $u$ of Content $i$ is not stored in the knapsack. Proposition \ref{prop:zipf12} lower bounds the expected transmission rate by 
$$\displaystyle \sum_{i=1}^n \displaystyle \sum_{u=1}^{b_i} (1-x_{i,u})(1-(1-p_i)^{r}),$$
which is the expected number of files not stored in knapsack and requested at least once.

Next, we evaluate a lower bound for the Zipf distribution with parameter $\beta \in (1,2)$. Our system mentioned in Section \ref{sec:setting} is more restricted than the system mentioned in Proposition \ref{prop:zipf12} and its content popularity follows the i.i.d. Zipf distribution. Hence, replacing the $p_i$'s accordingly in Proposition \ref{prop:zipf12} gives a valid lower bound for our system.

Recall that the solution to the fractional Knapsack problem (Section \ref{section:preliminaries}) is obtained by ranking the items in decreasing order of the value to weight ratio and choosing the maximum number of highest ranked items such that their cumulative weight is less than the knapsack capacity.

For the  Zipf distribution with parameter $\beta$, let $\tilde{i} = \Big\lceil(\frac{rp_1}{2a})^{1/\beta}\Big\rceil$ and let $z_i$ be the value to weight ratio  of Content $i$ (for all $u$). We have that, for $p_i=\frac{p_1}{i^\beta}$,
\begin{eqnarray*}
	z_i = \dfrac{v_{i,u}}{w_{i,u}} = \begin{cases}
		\dfrac{1-(1-p_i)^{r}}{\big\lfloor\frac{rp_i}{a}\big\rfloor}, & \text{ for } i \leq \tilde{i}, \\
		1-(1-p_i)^{r}, & \text{   for } i > \tilde{i},
	\end{cases}
\end{eqnarray*}

\begin{figure}[t]
	\begin{center}
		\includegraphics[width=3.25 in]{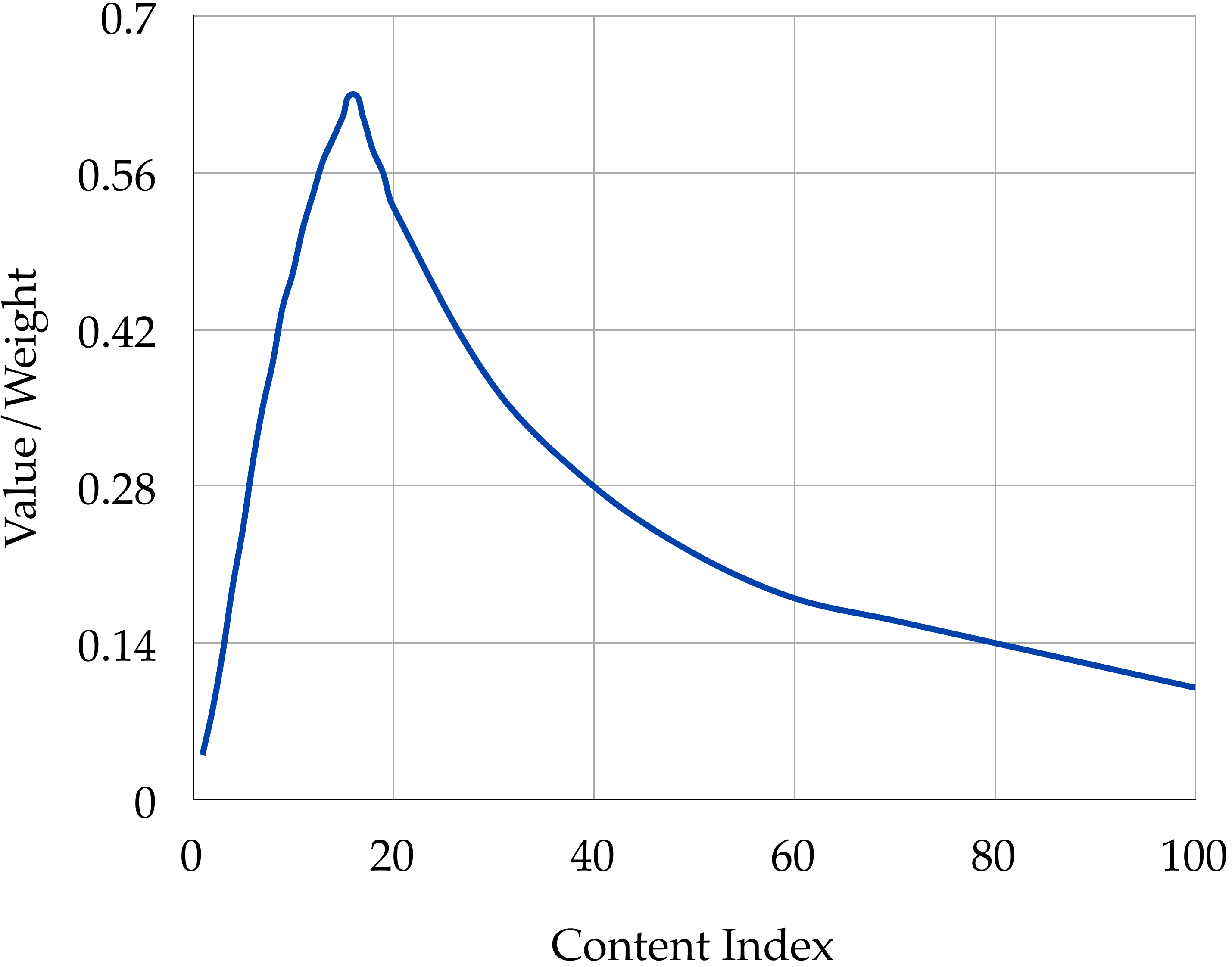}
		\caption{Value to weight ratio when content popularity follows the Zipf distribution for $n=m=r =100$, and $\beta = 1.2$.}\label{Zipf_ratio}
	\end{center}
\end{figure}

Given this, $z_i$ increases from $i=1$ to $\tilde{i}$ and decreases from $i=\tilde{i}+1$ to $n$. For example, Figure \ref{Zipf_ratio} illustrates how the ratio of the value to weight ratio for $n=m={r} =100$, $a=1$ and $\beta = 1.2$ varies as a function of content index. 

Hence, the optimal solution has the following structure: $\exists$ $i_{\text{min}}$, $i_{\text{max}}$ with $i_{\text{min}} \leq \tilde{i} \leq i_{\text{max}}$, such that, for $0\leq f_1, f_2 \leq 1$,
\begin{eqnarray*}
	x_{i,u} = \begin{cases}
		1, & \text{  if } i_{\text{min}} < {{i}} < i_{\text{max}}, \\
		f_1, & \text{if } i =i_{\text{min}} , \\
		f_2, & \text{if } i =i_{\text{max}} , \\
		0, & \text{otherwise},
	\end{cases}
\end{eqnarray*}

We optimize over $i_{\text{min}}$ and $i_{\text{max}}$ to get a lower bound on the expected transmission rate for particular values of $c$, $k$, {$\rho$} and $m$.  Theorem \ref{thm:zipf12} shows the results for the case where content popularity follows the Zipf distribution with parameter $\beta$, such that $1 < \beta <2$. 

\begin{theorem}\label{thm:zipf12}{\em [Lower bound]}
Consider a system with $n$ files, each of size $1$ unit with popularity following the Zipf distribution with parameter $\beta \in (1,2)$,  and $m=n/c$ ($c>0$, $c$ is a constant) caches of size $k $ units  each. Every cache is capable of  serving at most $a$ requests  and the total data served by a cache cannot exceed $1$ unit. The system receives a batch of  $r=\rho m $ ($0<\rho<1$, $\rho$ is a constant) i.i.d. requests. Let $R^*_{z_2}$ be the central server's transmission rate for the optimal policy. Then, 
\begin{eqnarray*}
 \mathbb{E}[R^*_{z_2}]
  = \begin{cases}
            \Omega(n^{2-\beta})&\text{if } k <  c - \Theta(1), \text{ } \forall a, \\
           \Omega\left(n^{(2-\beta-\gamma) / \beta }\right) &\text{if } k = c, \ a=m^{\gamma}, \gamma \in[0,1],\\
           \Omega(0)&\text{if } k>c+\Theta(1), \text{ } \forall a.
         \end{cases}
\end{eqnarray*}
\end{theorem}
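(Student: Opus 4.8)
\textbf{Proof plan for Theorem~\ref{thm:zipf12}.}

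The plan is to derive all three cases from Proposition~\ref{prop:zipf12} specialized to the i.i.d.\ Zipf distribution with $p_i = p_1 i^{-\beta}$, $b_i = b$, and $r = \rho m$, so that the lower bound becomes $\Omega\!\big(\sum_{i=1}^n (1-(1-p_i)^r) - \mathrm{O}^*\big)$ where $\mathrm{O}^*$ is the value of the fractional Knapsack problem with value $v_i = 1-(1-p_i)^r$, weight $w_i = \max\{\lfloor rp_i/a\rfloor, 1\}$, and capacity $W = mk$ (in units, after dividing by $b$). First I would record the elementary estimates we will use repeatedly: since $\beta\in(1,2)$, the normalizing constant gives $p_1 = \Theta(1)$ and $p_i = \Theta(i^{-\beta})$; consequently $1-(1-p_i)^r = \Theta(1)$ for $i = \OO(m^{1/\beta})$ and $1-(1-p_i)^r = \Theta(rp_i) = \Theta(m^{1-\beta}i^{-\beta})$ for $i = \omega(m^{1/\beta})$. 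Summing, $\sum_{i=1}^n v_i = \Theta(m^{1/\beta}) + \Theta(m^{1-\beta}\sum_{i>m^{1/\beta}} i^{-\beta}) = \Theta(m^{1/\beta}) = \Theta(n^{1/\beta})$, which will turn out to dominate in the interesting regime. The weights satisfy $w_i = 1$ precisely when $rp_i < 2a$, i.e.\ for $i > \tilde i = \lceil (rp_1/(2a))^{1/\beta}\rceil$, and $w_i = \Theta(rp_i/a) = \Theta(m^{1-\beta}i^{-\beta}/a \cdot m) $... more precisely $w_i = \Theta(mp_i/a)$ for $i \le \tilde i$.

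Next I would handle the three cases. For \emph{$k < c - \Theta(1)$}, the knapsack capacity $W = mk = (n/c)k < n - \Theta(n)$ is a constant fraction below $n$, while each weight is at least $1$; hence at least $\Omega(n)$ of the unit-weight bits (the files with $i > \tilde i$, of which there are $n - \tilde i = n - \OO((m/a)^{1/\beta}) = \Omega(n)$ when $a$ is not too large, and we should check the edge case $a$ large separately, where $\tilde i$ could approach $n$ only if $a = \Omega(m^{1-\beta}\cdot n^{\beta}) $, impossible since $a \le m$, so $\tilde i = o(n)$ always) cannot be stored. Among the not-stored files, those with index $\Theta(n)$ still have $v_i = \Theta(m^{1-\beta} n^{-\beta}) = \Theta(n^{1-2\beta})$, so naively summing the smallest $\Theta(n)$ values gives only $\Theta(n^{2-2\beta})$, which is too weak; instead I would argue the knapsack must omit files with small index. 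Because the value-to-weight ratio is unimodal peaking at $\tilde i$, the optimal fractional solution fills an interval $[i_{\min}, i_{\max}]$ around $\tilde i$; with capacity only $(k/c)n$ the interval cannot reach both down to $1$ and up to $n$, and a short calculation of how much value lies outside any feasible interval shows $\sum v_i - \mathrm{O}^* = \Omega(n^{2-\beta})$ — this is the key quantitative step and the main obstacle, requiring one to bound $\mathrm{O}^*$ by choosing the best interval and showing the left-out mass is $\Omega(n^{2-\beta})$. For \emph{$k = c$}, so $W = n$, and $a = m^\gamma$: now $\tilde i = \Theta((m^{1-\beta}m/a)^{1/\beta}) = \Theta(m^{(1-\gamma)/\beta})$, and on $i \le \tilde i$ the heavy weights $w_i = \Theta(m p_i/a) = \Theta(m^{1-\beta-\gamma}i^{-\beta})$ sum to $\Theta(m^{1-\beta-\gamma}\tilde i^{1-\beta}) = \Theta(m^{(1-\gamma)/\beta})$... one must carefully track whether the capacity $n = \Theta(m)$ suffices to store the high-value prefix; the total weight of \emph{all} files is $\Theta(\tilde i) + (n - \tilde i)\cdot 1 \approx n$, so it is a tight margin, and the unstored value works out to $\Omega(n^{(2-\beta-\gamma)/\beta})$ after optimizing the interval — again the interval-optimization is the crux. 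For \emph{$k > c + \Theta(1)$}, $W = mk > n + \Theta(n)$ exceeds the total weight $\sum w_i$ only if $\sum_{i\le\tilde i}(w_i - 1) \le \Theta(n)$, which holds since that sum is $\Theta(\tilde i) = o(n)$; hence every bit can be stored, $\mathrm{O}^* = \sum v_i$, and the bound is $\Omega(0)$, i.e.\ trivial.

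The main obstacle, as flagged, is the second-order interval-optimization in the first two cases: one knows the optimal knapsack solution fills a contiguous index interval $[i_{\min},i_{\max}]\ni\tilde i$ (by unimodality of $z_i$, already established in the excerpt), so the bound $\sum v_i - \mathrm{O}^*$ reduces to minimizing, over feasible $(i_{\min},i_{\max})$ with $\sum_{i=i_{\min}}^{i_{\max}} w_i \le W$, the leftover value $\sum_{i < i_{\min}} v_i + \sum_{i > i_{\max}} v_i$. I would split this into the ``left tail'' $\sum_{i<i_{\min}} v_i = \Theta(i_{\min})$ (since those $v_i = \Theta(1)$) and the ``right tail'' $\sum_{i>i_{\max}} v_i$, plug in the weight constraint to trade $i_{\min}$ against $i_{\max}$, and minimize; the balance point yields the stated exponents. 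I would also double-check the boundary subtleties: that $\lfloor rp_i/a\rfloor$ versus $rp_i/a$ only changes constants, that $\tilde i \ge 1$ and $\tilde i \le n$ throughout the stated parameter ranges, and that the $\Theta(1)$ slack in ``$k < c - \Theta(1)$'' and ``$k > c + \Theta(1)$'' is exactly what is needed to keep $W$ a multiplicative $\Theta(n)$ away from $n$ so the crude counting in the extreme cases goes through.
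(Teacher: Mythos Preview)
Your plan is broadly correct and matches the paper closely for Cases~2 and~3. For $k=c$ the paper does exactly what you outline: invoke Proposition~\ref{prop:zipf12}, use the unimodality of $z_i=v_i/w_i$ to reduce the optimal knapsack to an interval $[i_{\min},i_{\max}]\ni\tilde i$, set $i_{\min}=m^\alpha$, rewrite the capacity constraint as a relation between $\alpha$ and $i_{\max}$, and then balance the left tail $\Theta(m^\alpha)$ (from Lemma~\ref{lemma:Zipf_d}(a)) against the right tail $\Theta(m^{2-\beta-\alpha(\beta-1)-\gamma})$ to extract the exponent $(2-\beta-\gamma)/\beta$. Case~3 is trivial in both.

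Where you diverge is Case~1 ($k<c-\Theta(1)$). You propose to push the full knapsack interval-optimization through here too, and you correctly flag it as ``the main obstacle.'' The paper sidesteps that obstacle entirely: for this case it does \emph{not} invoke Proposition~\ref{prop:zipf12}. Instead it relaxes to a single cache of size $mk$ with unlimited service capability; in that relaxed system at most $mk$ whole files can be stored, so at least $n-mk=\Theta(n)$ files are wholly uncached under \emph{any} policy, and the expected number of those that are requested at least once is bounded below by $\int_{\Theta(n)}^{n}\big(1-(1-p_1 i^{-\beta})^r\big)\,di=\Omega(n^{2-\beta})$. This two-line argument bypasses the optimization over $(i_{\min},i_{\max})$ altogether --- in particular the awkward sub-regime $a=\Theta(1)$, where $w_1=\Theta(m)$ by itself eats a constant fraction of the capacity and forces $i_{\min}>1$, a case your sketch does not yet handle. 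Your knapsack route would eventually yield the same bound, but the paper's relaxation is substantially cleaner and removes what you identified as the hardest step.
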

%

\begin{remark} 
	\label{remark:insights}
For contents expected to be requested at least once, i.e., Contents $i$ such that $\frac{rp_i}{a} \geq 1$, it is optimal to store contents with lower popularity. Intuitively, given that two contents are going to be requested at least once each, all the requests for the less popular content can be served using fewer caches and a lesser amount of storage than the more popular content. Therefore, between the two contents, storing the less popular content reduces the transmission rate by $1$ unit using fewer memory resources. 
		
		For contents expected to be requested at most once, i.e., Contents $i$ such that $\frac{rp_i}{a} < 1$, it is optimal to store the more popular contents. Intuitively, between two contents with the same weight, storing the more popular content increases the probability of reducing the transmission rate required to serve incoming requests, while using the same amount of memory resources.
		
		Amongst the contents for each of which adequate number of copies have been stored, it is optimal to serve the one with fewer  requests. Intuitively, between two such contents, serving the content with less requests reduces the transmission rate by $1$ unit using fewer resources.
\end{remark}

Inspired by the above insights, we propose a storage and service policy, whose transmission rate is order-wise equal to the  lower bound in Theorem \ref{thm:zipf12}. We refer to this policy as Knapsack Storage + Match Least Popular (KS+MLP) policy.  

Our storage policy is inspired by the  Knapsack Storage policy, and is described in two parts. \\

\noindent \textbf{Knapsack Storage: Part 1} -- The first part of the Knapsack Storage policy determines how many caches each content is stored on by solving a fractional Knapsack problem \cite{goodrich2006algorithm}. The parameters of the fractional Knapsack problem are as follows:
{
\begin{itemize}
	\item[--] An unstored content will be broadcasted if it is requested at least once. Hence, the value of  Content $i$, 
	\begin{equation}\label{eq:value1}
	v_{i}=1-(1-p_i)^{r},
	\end{equation} is the probability that  Content $i$ is requested at least once in the time-slot. 	
	\item[--] The weight of Content $i$ ($w_{i}$) represents the number of caches  Content $i$ will be stored on if selected by the Knapsack problem. If we decide to store a content on the caches, we would like to ensure that all requests for that content can be served by the caches, so that the content need not to be transmitted by the central server. To ensure this, we fix $w_{i}$ to be high enough to ensure that with high probability, i.e., with probability $\rightarrow 1$ as $\tilde{m},m,n \rightarrow \infty$, the number of requests for  Content $i$ in a time-slot is less than or equal to $w_{i}$. We use the following values for the $w_{i}$'s:{
	\begin{eqnarray}\label{eq:weight1}
	w_{i} = \begin{cases}
	\frac{m}{a}, & \text{if } i = 1 \\
	\big\lceil \big(1 + \frac{p_1}{2}\big) \frac{rp_i}{a} \big\rceil, & \text{if } 1 < i \leq n_1 , \\
	\Big\lceil  \frac{4 p_1(\log m)^2}{a} \Big\rceil, & \text{if } n_1 < i \leq n_2, \\
	\big\lceil\frac{4}{a\delta}\big\rceil, & \text{if } n_2 < i \leq n. 
	\end{cases} 
	\end{eqnarray}
	where $n_1=\frac{(rp_1)^\frac{1}{\beta}}{(\log m)^\frac{2}{\beta}}$, and $n_2=m^{\frac{1+\delta}{\beta}}$ for some $0<\delta<\beta-1$.}
\end{itemize}

Since the value of Content $i$, $v_{i}=1-(1-p_i)^{r},$ is the probability that Content $i$ is requested at least once in the time-slot, {$ \sum_{i=1}^n(1-x_{i}) v_{i},$}
is the expected number of contents that are not stored in the knapsack and are requested at least once. As a result, maximizing $\sum_{i=1}^n x_{i} v_{i}$} minimizes the expected number of contents that are not stored in the knapsack and are requested at least once, which is equivalent to minimizing the expected transmission rate.

Figure \ref{fig:knapsack_storage_Part 1} formally describes Knapsack Storage: Part 1. 
\begin{figure}[h]
	\hrule
	\vspace{0.1in}
	\begin{algorithmic}[1]
		\STATE Solve the following fractional Knapsack problem
		\begin{eqnarray*}
			\max_{\{x_i\}} &&  \displaystyle \sum_{i=1}^n x_i v_i \\
			\text{s.t. }  &&b \displaystyle \sum_{i=1}^n x_i w_i  \leq mk, \\
			&& 0 \leq x_i \leq 1, \text{ } \forall i,
		\end{eqnarray*}
		where $v_i$s and $w_i$s are as defined in equations \ref{eq:value1} and \ref{eq:weight1} respectively.
		\STATE The set of contents to be stored $$S = \{\lfloor x_i \rfloor w_i \text{ copies of Content } i, 1 \leq i \leq n\}.$$
	\end{algorithmic}
	\vspace{0.1in}
	\hrule
	\caption{Knapsack Storage: Part 1 -- \sl Determines how many caches each content is stored on.}
	\label{fig:knapsack_storage_Part 1}
\end{figure}

\begin{remark}
	{{Recall from Remark \ref{remark:insights}} that the optimal solution to the fractional Knapsack problem prioritizes selecting contents with larger value to weight ratios. Therefore, for certain values of the system parameters ($n$, $m$, $r$, $k$), the optimal solution to the fractional Knapsack problem in Figure \ref{fig:knapsack_storage_Part 1} does not store the most popular contents on the caches. {As discussed in Remark \ref{remark:insights}, intuitively, in order to serve all the requests for a popular content via the caches, the content needs to be replicated on a large number of caches, since each cache can only serve $a$ requests at a time. It follows that, at times, it is better to serve all the requests for a popular content via a single transmission from the central server, instead of replicating it on a large number of caches, thus using up a lot of memory resources.}}  
	
\end{remark}

\textbf{Knapsack Storage: Part 2} -- The next decision to be made is which contents to store on which caches, i.e., how to partition the set of contents selected by Knapsack Storage: Part 1 (Figure \ref{fig:knapsack_storage_Part 1}) into $m$ groups.

\begin{figure}[h]
	\hrule
	\vspace{0.1in}
	{
	\begin{algorithmic}[1]
		\STATE Sort content copies in $S$ obtained in Knapsack Storage: Part 1 (Figure \ref{fig:knapsack_storage_Part 1}) in increasing order of content index.   
		\STATE Divide each file into $a$ sub-files of equal size. 
		\STATE Store sub-file copy ranked $l$ in the ordered sequence on cache $((l-1)\mod m + 1)$.
	\end{algorithmic}}
	\vspace{0.1in}
	\hrule
	\caption{Knapsack Storage: Part 2 -- \sl Determines which contents to store on each cache.}
	\label{fig:knapsack_storage_Part 2}
\end{figure}

The following example illustrates Knapsack Storage:Part 2.

{\textit{Example:} Consider a system consisting of four caches, each with $1$ unit memory and can serve at most $2$ users as long as output data not exceed $1$ unit. Say the solution for Figure \ref{fig:knapsack_storage_Part 1} gives $x_1 = x_2 = x_3 = 1$ and 0 otherwise, and  $w_1 = 2$, $w_i= 1, \forall i\in \{2,3,...,n\}$. Figure \ref{fig:knapsack_storage_Part 2_example} illustrates Knapsack Storage: Part 2.
\begin{figure}[h]
\begin{center}
	\begin{center}
		Sorted S: 
		\begin{tabular}{| c | c | c | c |}
			\hline
			$1$ & $1$ &  $2$  & $3$   \\
			\hline
		\end{tabular}\\
	(i)

		\text{}

		Divide each file into 2 sub-files 
		{\footnotesize
			\begin{tabular}{|c|c|c|c|c|c|c|c|}
				\hline
				$1a$&$1b$&$1a$&$1b$&$2a$&$2b$&$3a$&$3b$\\
				\hline
		\end{tabular}}\\
	(ii)

		\text{}

		\begin{tabular}{| c ||c| c |}
			\hline
			Cache $1$ & $1a$ & $2a$ \\ 
			\hline
			Cache $2$& $1b$ & $2b$ \\ 
			\hline 
			Cache $3$&$1a$ & $3a$ \\  
			\hline
			Cache $4$& $1b$ &$3b$ \\  
			\hline
		\end{tabular}\\\vspace{0.05in}
	
	(iii)
	\end{center}
	\caption{Illustration of Knapsack Storage: Part 2 for a system with four caches each with $a=2$ and $k=1$.  Here, we assume that the  Knapsack Storage: Part 1 is $x_1 = x_2 = x_3  = 1$ and 0 otherwise, and  $w_1 = 2$, $w_i= 1, \forall i\in \{2,3,...,n\}$.  Hence, S contains 1 two ($w_1$) times and 2 and 3 one ($w_2=w_3$) time(s). (i) sorted S (ii) the files in sorted S are divided into two equal ($a$ and $b$) parts without changing the order in sorted S, and (iii) sub-file copy ranked $l$ in the ordered sequence is stored in cache $((l-1)\mod 4 + 1)$. For example sub-file $2a$ is ranked $5^{th}$ in (ii) and is stored in Cache $((5-1)\mod 4 + 1)=1$. }\label{fig:knapsack_storage_Part 2_example}
\end{center}
\end{figure}}


\textbf{Matching Policy: Match Least Popular} --
The next task is to match requests to caches. The key idea of the Match Least Popular policy is to match requests for the less popular contents before matching requests for the more popular contents. Please refer to Figure \ref{fig:match_least_popular} for a formal description of the Match Least Popular policy.

{Since each content is divided into $a$ sub-files, we divide each request for a content into $a$ sub-requests and allocate these sub-requests to caches storing the corresponding sub-files. Since each cache can serve $a$ sub-requests, we make $a$ copies of each cache and find a matching between the set of sub-requests and the set of cache copies.
	
	We index sub-files and cache copies as follows:
\begin{itemize}
	\item[--] The $j^{\text{th}}$ sub-file of File $i$ is indexed {$(i-1)a+j$,} for $1 \leq i \leq n, \ \text{and }1 \leq j \leq a.$ 
	
	\item[--] The $j^{\text{th}}$ copy of cache $i$ is indexed {$(i-1)a+j$,} for $1 \leq i \leq m, \ \text{and }1 \leq j \leq a.$ 
	
\end{itemize}
Figure \ref{fig:match_least_popular} describes the Match Least Popular policy.

}
\begin{figure}[H]
	\hrule
	\vspace{0.1in}
	{
	\begin{algorithmic}[1]
		\STATE initialize $i = an$, set of idle caches copies $= \{1,2,..., am\}$. 
		\IF {the number of requests for Sub-file $i$  is more than the \\
			\indent \hspace{0.09in}  number of idle cache copies storing Sub-file $i$,}
		\STATE goto Step 8.
		\ELSE 
		\STATE match requests for Sub-file $i$ to idle cache copies storing Sub-file $i$, chosen uniformly at random. 
		\STATE update the set of idle cache copies. 
		\ENDIF
		\STATE $i = i-1$, goto Step 2.
	\end{algorithmic}}
	\vspace{0.1in}
	\hrule
	\caption{Match Least Popular -- \sl Matches requests to caches.  Here, idle cache copies means cache copies not allocated to any request.}
	\label{fig:match_least_popular}
\end{figure}

{\textbf{Service Policy} --
All the sub-requests matched to copies of Cache $i$ are served by the Cache $i$ and all unserved requests are served by the server.}

\begin{theorem}\label{thm:zipf12u} {\em [Upper bound]}
	Consider a system with $n$ files with popularity following the Zipf distribution with parameter $\beta \in (1,2)$,  and $m=n/c$ ($c>0$, $c$ is a constant) caches of size $k $ units  each. Every cache is capable of  serving at most $a$ requests  and the total data served by a cache cannot exceed $1$ unit. The system receives a batch of  $r=\rho m $ ($0<\rho<1$, $\rho$ is a constant) i.i.d. request.
		Let $R_{z_2}$ be the central server's transmission rate for our policy described above. Then, 
		\begin{eqnarray*}
			\mathbb{E}[R_{z_2}]
			= \begin{cases}
				\OO\big(n^{2-\beta}\big)&\text{if } k <  c - \Theta(1), \text{ } \forall a, \\
				\OO\big(n^{(2-\beta-\gamma) / \beta }\big) &\text{if } k = c, \ a=m^{\gamma}, \gamma \in [0,1],\\
				\OO(1)&\text{if } k>c+\Theta(1), \text{ } \forall a.
			\end{cases}
		\end{eqnarray*}
\end{theorem}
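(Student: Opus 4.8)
The plan is to analyze the KS+MLP policy by tracking, for each content, the probability that it is \emph{not} served by the caches, and then summing these probabilities (times file size $1$) to bound $\EE[R_{z_2}]$. The three cases are governed by the total memory $mk$ relative to the ``cheap'' storage cost of the tail of the catalog. First I would establish the key deterministic guarantee of the Match Least Popular step: because the storage in Knapsack Storage Part~1/Part~2 allocates $w_i$ copies of each selected content spread evenly (round-robin) across the $m$ caches, and the weights $w_i$ in \eqref{eq:weight1} are chosen precisely so that $a w_i$ cache-copies dominate the typical number of sub-requests for Content~$i$ w.h.p., a selected Content~$i$ ($x_i=1$) fails to be fully matched only on a low-probability event. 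Here I would invoke standard Chernoff/Poisson-approximation bounds on $\mathrm{Binomial}(r,p_i)$: for $i=1$, $w_1=m/a$ and $r=\rho m<m$, so $aw_1=m>r$ and the most popular file is \emph{always} served; for $1<i\le n_1$, $w_i\approx(1+p_1/2)rp_i/a$ gives a constant-factor slack so the failure probability is $\exp(-\Theta(rp_i))$; for $n_1<i\le n_2$, $w_i\approx 4p_1(\log m)^2/a$ makes the failure probability $O(m^{-2})$; for $n_2<i\le n$, $w_i=\lceil 4/(a\delta)\rceil$ and $rp_i\le r/n_2^{\beta}=\rho m^{-\delta}\to 0$, so Content~$i$ is requested at all with probability $O(m^{-\delta})$ and the $a w_i\ge 4/\delta$ copies swamp that. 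Summing the contributions of selected contents over all four bands gives a $o(1)$ (indeed $O(n^{-\epsilon})$) total contribution; crucially the Match Least Popular ordering guarantees that a shortfall on a popular content never ``steals'' cache copies reserved for less popular ones, so these per-content bounds compose.

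Next I would bound the contribution of the \emph{unselected} contents, i.e.\ those with $x_i<1$, whose requests go entirely to the server: this contributes $\sum_{i:\,x_i<1}(1-(1-p_i)^r)\le\sum_{i:\,x_i<1}\min\{rp_i,1\}$. By the Knapsack structure (Remark~\ref{remark:insights}), the selected set is an interval $[i_{\min},i_{\max}]$ around $\tilde i$, so the unselected contents are a prefix $i<i_{\min}$ (few, very popular) and a suffix $i>i_{\max}$ (many, very unpopular). For the suffix, $\sum_{i>i_{\max}}rp_i\asymp r\,i_{\max}^{1-\beta}\asymp n\,i_{\max}^{1-\beta}$. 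The whole argument then reduces to computing $i_{\max}$ from the budget constraint $b\sum x_i w_i\le mk$, i.e.\ how far into the catalog the round-robin fills the $mk$ units: using $w_i=\Theta(\max\{rp_i/a,1\})=\Theta(\max\{n i^{-\beta}/a,1\})$ and $c=n/m$ fixed, one solves for the crossover index and plugs back. In the regime $k<c-\Theta(1)$ the budget is too small to store even the whole ``weight-one tail'' (indices $i\gtrsim(r/a)^{1/\beta}$), and one finds $i_{\max}=\Theta(n)$ up to constants, giving $\sum_{i>i_{\max}}rp_i=\Theta(n^{2-\beta})$. In the regime $k=c$, $a=m^\gamma$: the weight-one band starts at $i\asymp(r/a)^{1/\beta}\asymp n^{(1-\gamma)/\beta}$ and the residual budget lets the fill reach $i_{\max}\asymp n^{(1+\gamma)/\beta}$ (roughly: there are $\approx mk=\Theta(n)$ unit-weight slots past that starting index), so $r\,i_{\max}^{1-\beta}\asymp n\cdot n^{(1+\gamma)(1-\beta)/\beta}=n^{(2-\beta-\gamma)/\beta}$, matching the claimed bound and the lower bound of Theorem~\ref{thm:zipf12}. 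In the regime $k>c+\Theta(1)$ the budget exceeds the total weight $b\sum_i w_i$ (the heavy head costs $\Theta(r/a)=o(m)$ and the tail costs $O(n)$ with a constant $<k$), so every content is selected and $\EE[R_{z_2}]=O(1)$ from the first part of the argument alone.

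The main obstacle I expect is the bookkeeping in the middle case $k=c$: one must verify that the chosen weights $w_i$ (with their $\log m$ and $\delta$ slack factors, and the band boundaries $n_1,n_2$) are simultaneously (i)~large enough that the w.h.p.\ matching guarantees hold and the summed failure probabilities are $o(1)$, and (ii)~small enough that the extra slack does not inflate $\sum x_i w_i$ beyond a constant factor, so that $i_{\max}$ still reaches $\Theta(n^{(1+\gamma)/\beta})$ and the exponent $(2-\beta-\gamma)/\beta$ is tight rather than merely an upper bound. A secondary technical point is handling the boundary contents $i_{\min},i_{\max}$ with fractional $x_i\in(0,1)$ — since Part~1 stores only $\lfloor x_i\rfloor w_i$ copies, these are effectively treated as unselected, contributing at most $O(1)$ to the rate, which is absorbed. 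I would also need a concentration statement that the \emph{random} request vector does not deviate enough to change $i_{\max}$ by more than a constant factor; a union bound over the $O(\log n)$ dyadic popularity scales suffices. Everything else is routine summation of geometric-type series in $i^{-\beta}$.
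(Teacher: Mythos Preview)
Your overall decomposition --- bounding separately (i) the probability that a selected content ($x_i=1$) fails to be fully served by MLP and (ii) the server load from unselected contents --- is exactly the structure the paper uses (Lemmas~\ref{lemma:serve_all_requests} and~\ref{lemma:performance:our_policy}), and your handling of the four weight-bands for part~(i), as well as your arguments for Cases~1 and~3, are essentially correct and in line with the paper.

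The genuine gap is in Case~2 ($k=c$, $a=m^\gamma$). Your claim $i_{\max}\asymp n^{(1+\gamma)/\beta}$ is wrong: with budget $mk=n$ and (near-)unit weights past $\tilde i$, your own parenthetical reasoning (``$\approx mk=\Theta(n)$ unit-weight slots'') already gives $i_{\max}=n(1-o(1))$, not a power of $n$ smaller than $n$; and your exponent arithmetic does not check either, since $1+(1+\gamma)(1-\beta)/\beta=(1+\gamma(1-\beta))/\beta\neq(2-\beta-\gamma)/\beta$. More fundamentally, you have lost the key tradeoff. The rate is controlled not by $i_{\max}$ alone but by the pair $(i_{\min},i_{\max})$ linked through the budget: storing the very popular files down to $i_{\min}=m^\alpha$ costs $\sum_{i\ge i_{\min}} w_i\approx\Theta(m^{1-\alpha(\beta-1)-\gamma})$ extra units, which forces $n-i_{\max}=\Theta(m^{1-\alpha(\beta-1)-\gamma})$; the resulting rate is
\[
i_{\min}+\int_{i_{\max}}^{n} r p_i\,di \;=\; m^{\alpha}\;+\;\Theta\!\big(m^{\,2-\beta-\alpha(\beta-1)-\gamma}\big),
\]
minimized at $\alpha=(2-\beta-\gamma)/\beta$. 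The prefix contribution $i_{\min}$ is not negligible --- it is precisely what balances the suffix --- and the paper's proof makes this optimization over $\alpha$ explicit. Without it you cannot reach the exponent $(2-\beta-\gamma)/\beta$; your suffix-only calculation, done correctly, would yield only $O(n^{2-\beta})$.

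A minor side point: $i_{\min}$ and $i_{\max}$ are determined by the placement policy from the popularity profile, not by the random request vector, so the concentration concern you raise about ``$i_{\max}$ changing'' does not arise; all the randomness is already absorbed in part~(i).
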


 From Theorems \ref{thm:zipf12} and \ref{thm:zipf12u}, we conclude that if content popularity follows the Zipf's distribution with $1 < \beta < 2$, the Knapsack Storage + Match Least Popular policy is order-optimal in the class of policies which do not use coded placement or delivery.

 {Theorems \ref{thm:zipf12} and \ref{thm:zipf12u}} show that for $k\leq c - \Theta(1)$, the central server's transmission rate for any value of $a$ and any policy is $\Theta(n^{2-\beta})$. On the other hand, Theorem \ref{thm:zipf12u} shows that for $k \geq c+\Theta(1)$, there exists a storage/service policy for which the transmission rate of the central server for $a = 1$ (no resource pooling) is $\OO(1)$ with high probability. We thus conclude that for the Zipf popularity distribution with $\beta \in (1,2)$, there is no order-wise benefit of resource pooling ($a > 1$) for $k \le c - \Theta(1)$ and $k \geq c+\Theta(1)$. In addition, for $ k = c$, we need $a$ to be at least poly$(n)$, more specifically $\Omega(n^{2-\beta})$, to bring the server transmission rate to a constant. Therefore, the benefits of resource pooling in the case where content popularity follows the Zipf distribution with $\beta \in (1,2)$ are limited. Note that this is in sharp contrast to the results for the case when $\beta \in [0, 1)$, where a small amount of resource pooling, in particular $a = \Theta(\ln n)$, is sufficient to bring down the central server's transmission rate to $\OO(1)$.
 
\remove{
{\color{blue}
\begin{remark}
	It can be shown that the information-theoretic lower bound for our setting with $\beta>1$ is 
	\begin{eqnarray*}
		\mathbb{E}[R_{z_2}^*]
		= \begin{cases}
			\Omega\big(n^{2-\beta}\big)&\text{if } k <  c - \Theta(1), \text{ } \forall a, \\
			\Omega(0)&\text{if } k>c+\Theta(1), \text{ } a=1.
		\end{cases}
	\end{eqnarray*}
	From Theorem \ref{thm:zipf12u}, we conclude that our KS+MLP policy is order wise optimal except a small region around $k=c$. This is quite contrasting with \cite{maddah2014fundamental}, which shows coded caching achieves order optimal results. 
	
\end{remark}
}
 }
 
 \begin{remark}
	{Theorems \ref{thm:zipf12} and \ref{thm:zipf12u} also hold for $\beta \geq 2$ and $\rho=1$.}
\end{remark}


	\section{Simulation Results}
\label{sec:simulations}

In Section \ref{sec:results}, we evaluated the performance of our policies asymptotically, $i.e.,$ as $m\rightarrow \infty$. In this section, we simulate the system for finite values and compare the performance of various placement and delivery policies. {We begin by simulating the performance of the Proportional Placement + Optimal Matching Routing (PP+OMR) policy described in Section~\ref{zipf} and whose asymptotic performance for the case of $\beta < 1$ was presented. Recall that, in the PP+OMR policy, the service policy (OMR) is based on the maximal matching between the set of servers and the set of sub-requests in each time-slot. Since this is an expensive operation with time complexity {$\OO(k\min\{(an)^{2.376},(an)^{2+\beta} \})$}, we propose three other computationally inexpensive service policies and evaluate their performance via simulations. Our motivation is to determine if the benefits of resource pooling extend to the computationally inexpensive service policies as well. 

The first alternate service policy is the Match Least Popular (MLP) policy, which is described in  Section \ref{zipfb}.} The second service policy as Online Randomized Routing (ORR). Let requests be indexed from 1 to $r$. Starting from the first request, this policy sequentially allocates requests to caches as follows: each request is divided into $a$ sub-requests, one each for the $a$ sub-files. Each sub-request is then allocated to any cache which stores the requested sub-file and can accommodate one more request, chosen uniformly at random. The third service policy, called Online Least-loaded Routing (OLLR) also allocates requests in a sequential manner. The difference between the ORR and the OLLR policy is that the OLLR policy allocates each sub-request to the least loaded cache which stores the requested sub-file and can accommodate one more request, breaking ties uniformly at random.  The time complexity of these new policies is $\OO(akn^{1+\beta})$. Note that the OMR and MLP service policies are offline policies, which do the cache assignment based on the entire collection of requests. The ORR policy and OLLR policies are online policies, which serves requests in an arbitrary sequential order. We combine the Proportional Placement (PP) policy with each of these delivery policies and compare their performance, denoting the corresponding policies as PP+OMR policy, PP+MLP policy, PP+ORR policy and PP+OLLR policy.

We simulate the distributed content delivery network described in Section \ref{sec:setting} to compare the performance of the PP+MLP, PP+ORR, PP+OLLR and PP+OMR policies as a function of various system parameters like the storage capacity of the caches ($k$), the maximum number of users each cache can serve in a time-slot ($a$), {and the product ($ak$).} 
 We focus on the case where the number of caches ($m$) is equal to the number of files ($n$) and the content popularity follows the Zipf distribution with parameter $\beta = 0.3$. For each set of system parameters, we report the mean transmission rate averaged over 1000 iterations.


\begin{figure}[t]
	\centering

	\begin{minipage}{0.46\textwidth}
		\centering
	\includegraphics[width = 1.0\textwidth]{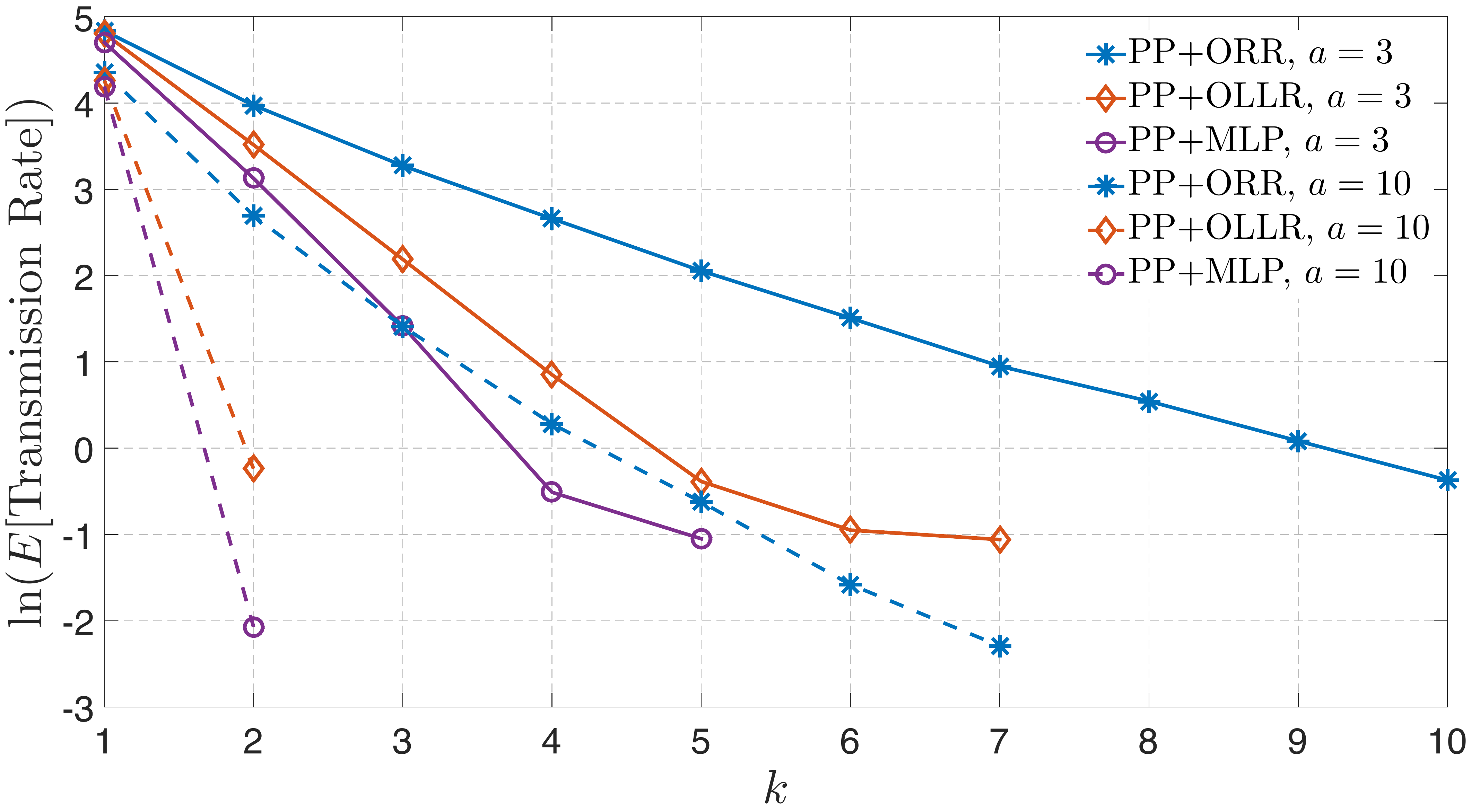}\\(i)
	\end{minipage}
   \begin{minipage}{0.46\textwidth}
   	\centering
   	\includegraphics[width = 1.0\textwidth]{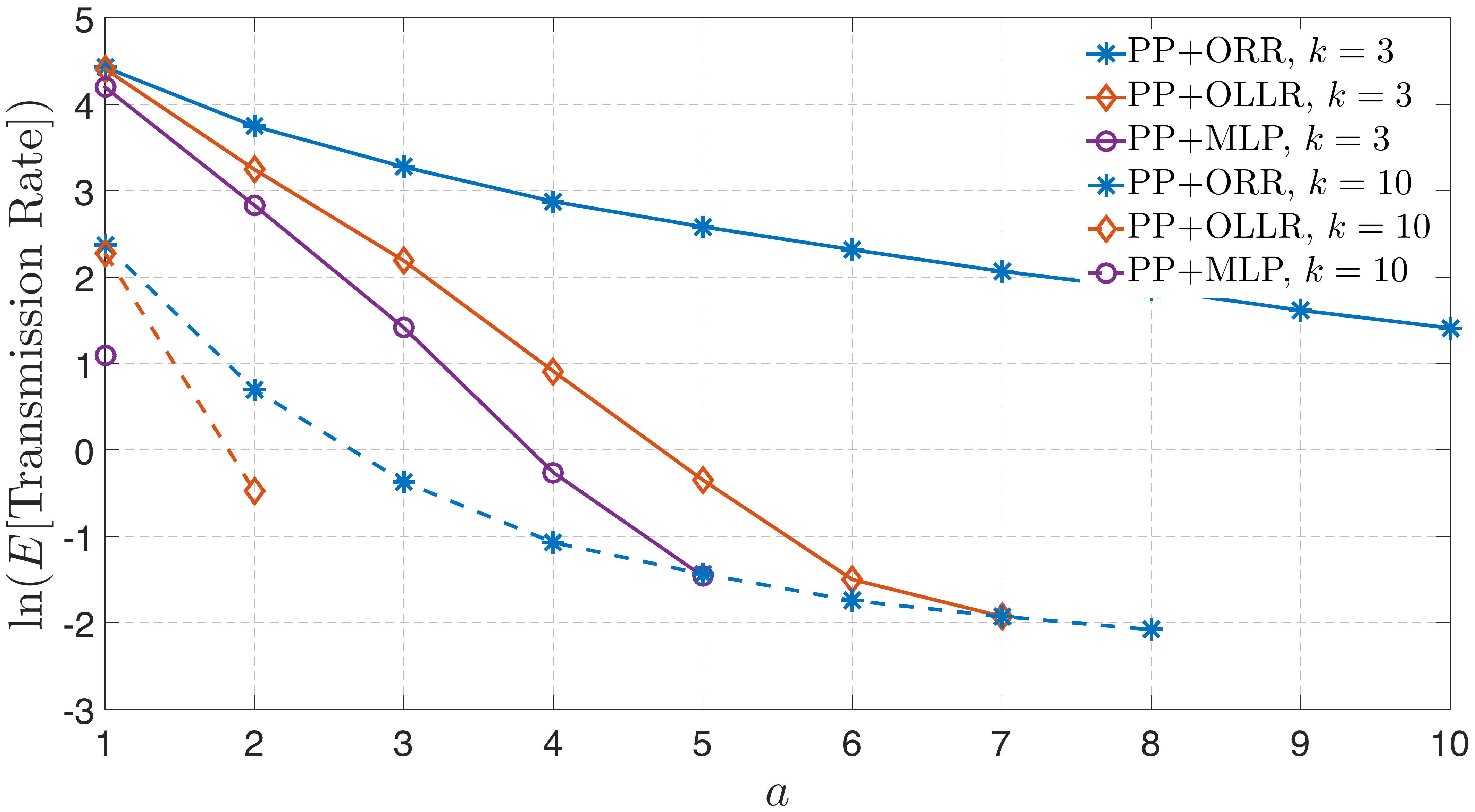}\\(ii)
   \end{minipage}
\begin{minipage}{0.46\textwidth}
	\centering
	\includegraphics[width = 1.0\textwidth]{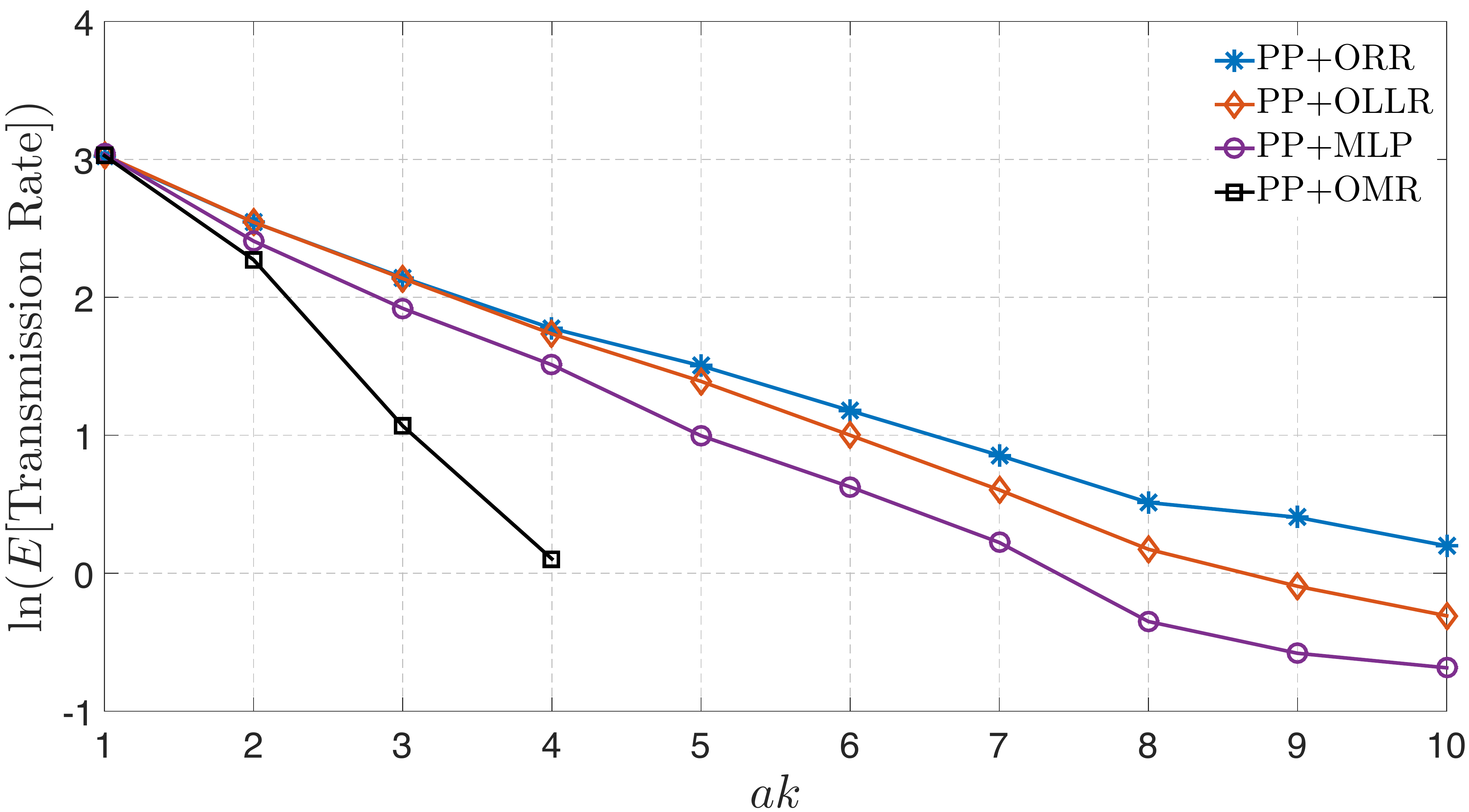}\\(iii)
\end{minipage}
	\caption{Plot of the mean transmission rate for the PP+ORR, PP+OLLR, PP+MLP, and PP+OMR policies as a function of (i) the storage capacity of the cache ($k$), (ii) the maximum number of users each cache can serve in a time-slot ($a$), and (iii) the product $ak$. The system parameters for Figures (i) and (ii) are $n=1000$ files  and $m=1000$ caches,  and  $r=800$ requests. The system parameters for Figure (iii) are $n=100$ files and $m=100$ caches,  and  $r=80$ requests.  Due to high time complexity, we plot the performance of the PP+OMR only for $n=m=100$ (Figure (iii)). In all these plots, the transmission rate is 0 after a few values. Since, we plot the log of expected transmission rate, the corresponding lines are terminated earlier.}\label{fig:all}
		\hspace{-0.4in}	
\end{figure}


Theorem~\ref{thm:zipf01} states that the upper bound on the transmission rate for the OMR service policy decreases exponentially with the product of the storage capacity of the cache ($k$) and the maximum number of users each cache can serve in a time-slot $(a)$.
In Figure \ref{fig:all}(i), we plot the mean transmission rate for PP+MLP, PP+ORR, and PP+OLLR policies as a function of the storage capacity of each cache ($k$), for a system where the number of files and caches is 1000, and a batch of 800 requests is  served.  We see that for a fixed value of $a$, the transmission rate decreases exponentially with $k$. In addition, for a fixed value of $k$, the performance of all policies improves with increase in $a$.


\remove{\begin{figure}[h]
	\centering
	
	\begin{minipage}{0.45\textwidth}
		\centering
	\includegraphics[width = 1.0\textwidth]{na}
	\caption{Plot of the mean transmission rate for PP+MLP, PP+ORR, and PP+OLLR policies as a function of the maximum number of users each cache can serve in a time-slot ($a$), for a system with $1000$ files ($n=1000$) and $1000$ caches ($m=1000$),  and a batch of 800 requests ($r=800$) is served { and a batch of 800 requests ($r=800$), each request follows the Zipf distribution with $\beta=0.3$ independently.}\vspace{0.2in}}\label{fig:orrollr_a}
	\end{minipage}
\hspace{0.1in}
	\begin{minipage}{0.45\textwidth}
		\begin{center}
		\includegraphics[width=1.0\textwidth]{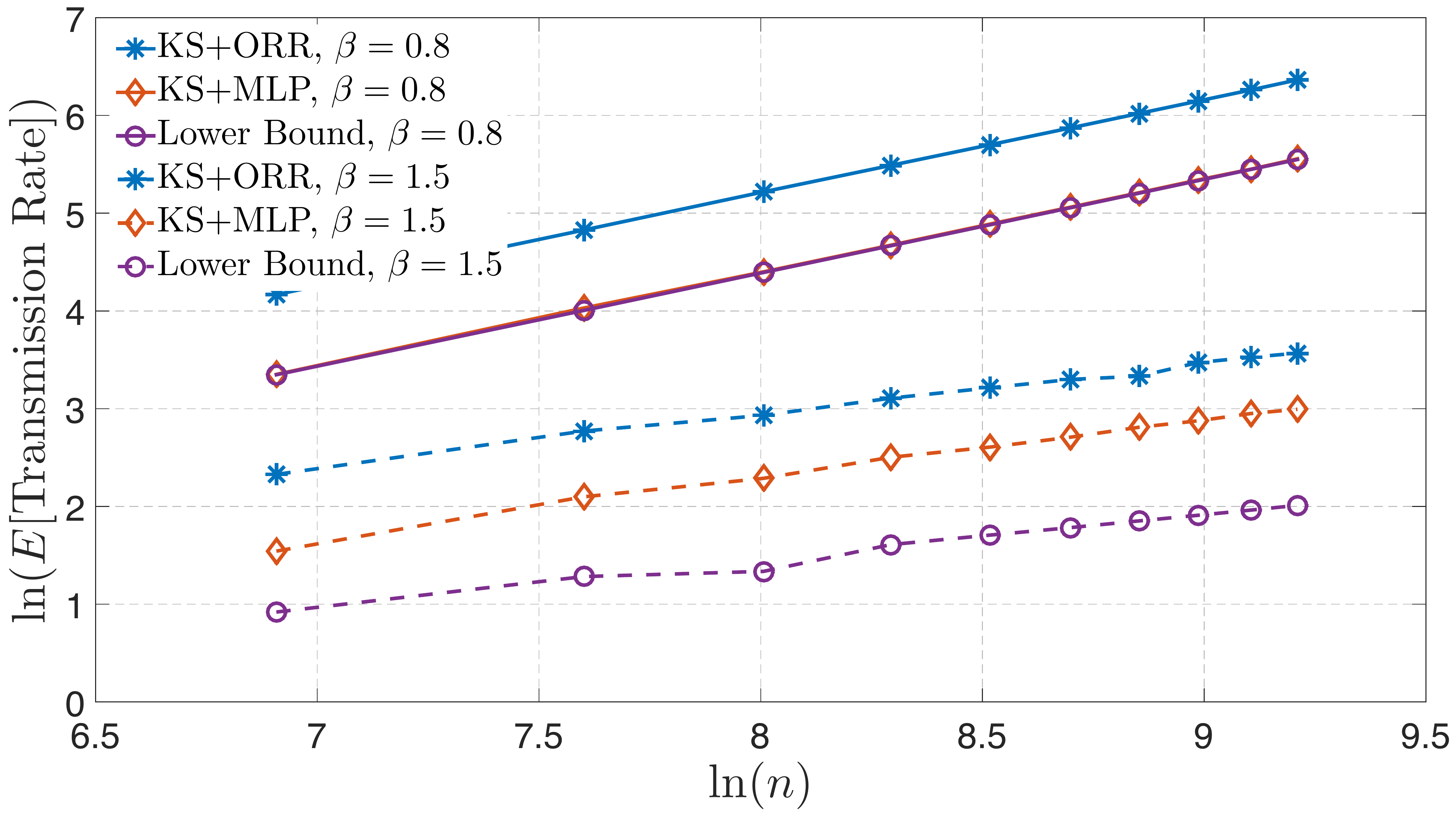}
		\caption{\color{red}\sl Plot of the mean transmission rate for the KS+MLP policy, KS+ORR policy and the lower bound on the expected transmission rate as a function of the number of files ($n$), for a system where the number of caches ($m$) is one fifth of the number of files ($n = 5m$), and each cache can store three files (${k}=3$). \label{fig:vary_size_1}}
	\end{center}
	\end{minipage}
	
\end{figure}}

\begin{figure}[t]
	\centering
	\begin{minipage}{0.46\textwidth}
		\centering
		\includegraphics[width = 1.0\textwidth]{nn}\\(i)
	\end{minipage}
	\begin{minipage}{0.46\textwidth}
		\centering
		\includegraphics[width = 1.0\textwidth]{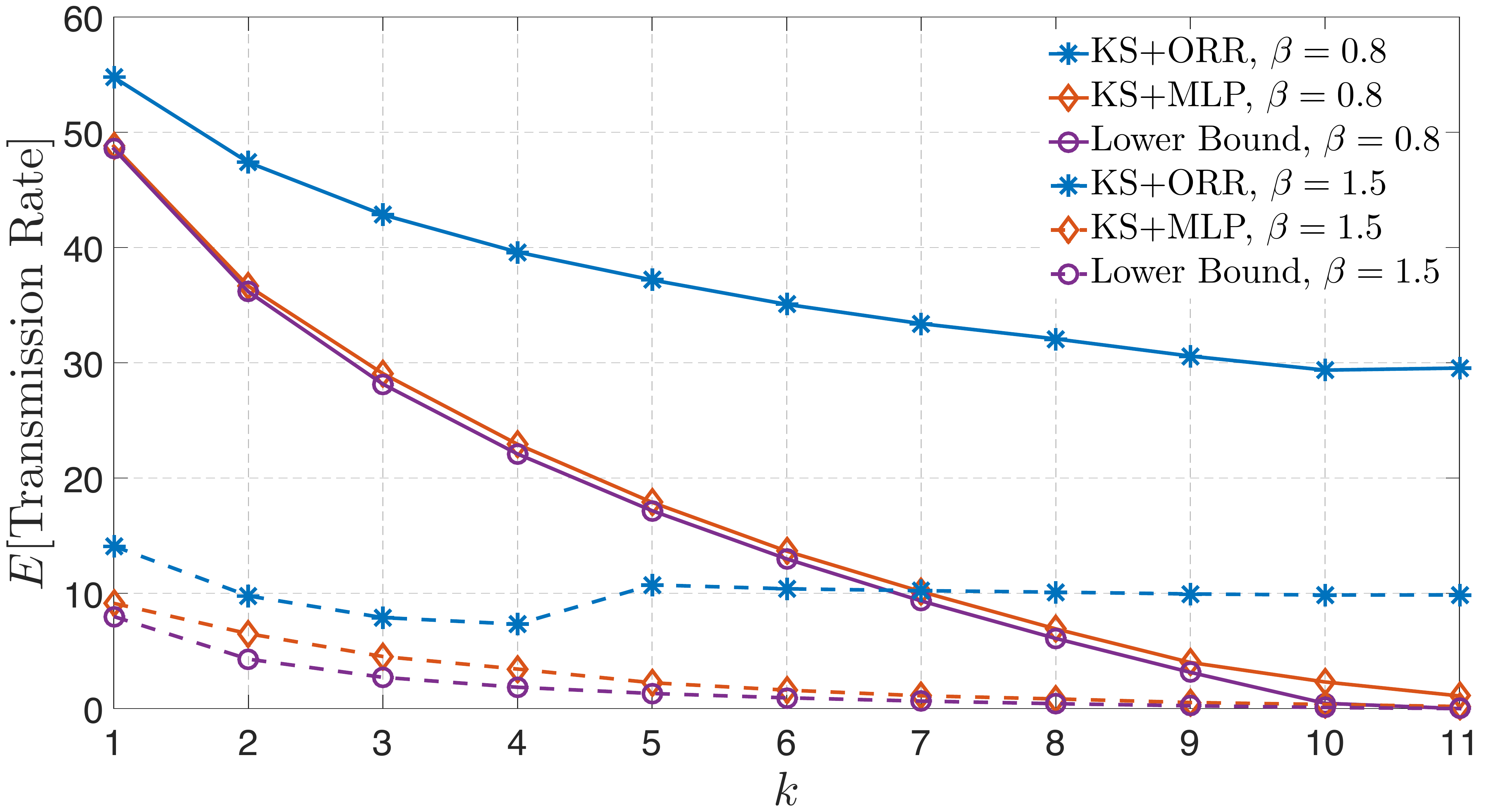}\\(ii)
	\end{minipage}
	\begin{minipage}{0.46\textwidth}
		\centering
		\includegraphics[width = 1.0\textwidth]{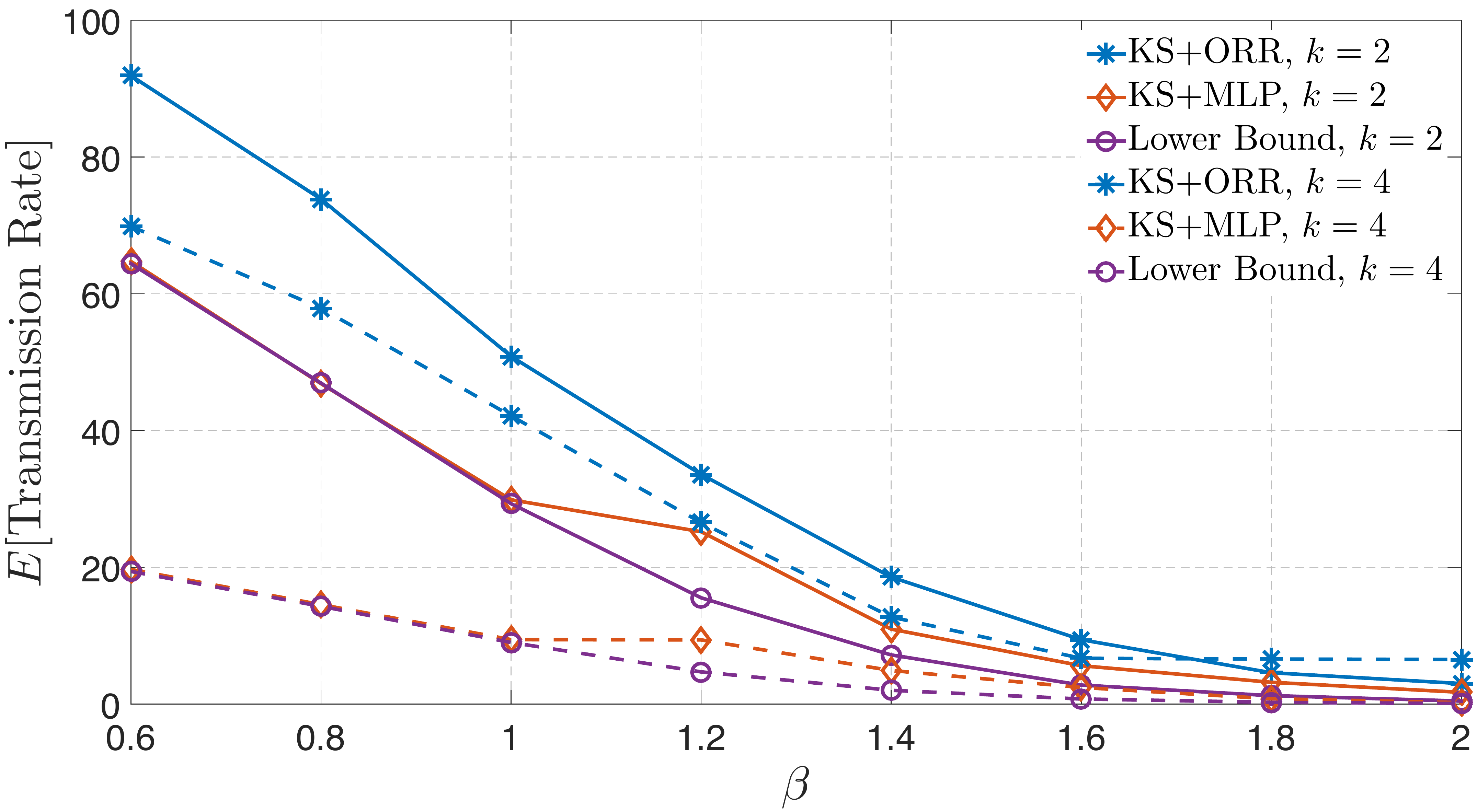}\\(iii)
	\end{minipage}
	\caption{Plot of the mean transmission rate for the KS+MLP policy, KS+ORR policy and the lower bound on the expected transmission rate (i) as a function of the number of files ($n$), for a system where the number of caches ($m$) is one fifth of the number of files ($n = 5m$), and each cache can store three files (${k}=3$), (ii) as a function of storage capacity per cache (${k}$) for a system with $n=1000$ files and $m=100$ caches,	and (iii) as a function of Zipf parameter ($\beta$) for a system with $n=1000$ files and $m=200$ caches. In all the figures, the number of requests ($r$) in a time-slot is equal to the number of caches ($m$).}	\label{fig:knapsack}
	\hspace{-0.4in}	
\end{figure}
In Figure \ref{fig:all}(ii), we plot the mean transmission rate for PP+MLP, PP+ORR, and PP+OLLR policies as a function of the maximum number of users each cache can serve in a time-slot ($a$), for a system where the number of files and caches equal to 1000, and a batch of 800 requests is served. From the plot, we see that for a fixed value of $k$, the transmission rate decreases exponentially with $a$. In addition, for a fixed value of $a$, the performance of all policies improves with increase in $k$. In Figure \ref{fig:all}(iii), we plot the mean transmission rate for PP+MLP, PP+ORR, PP+OLLR and PP+OMR policies as a function of the product $ak$, for a system where the number of files and the number of caches is 100, and a batch of 80 requests is served. Note that out of all the combinations of $a$ and $k$ which lead to the same product $(ak)$, we consider the values of $a$ and $k$, which give the minimum mean transmission rate.  From the plot, we see that the transmission rates for all the four policies decrease exponentially with $ak$, with the offline OMR and MLP policies performing better than the online OLLR and ORR policies as expected. We thus conclude that benefits of resource pooling also extend to the computationally inexpensive service policies PP+MLP, PP+ORR, and PP+OLLR.

Next, we evaluate the performance of Knapsack Storage + Match Least Popular (KS+MLP) policy described in Section~\ref{zipfb} whose asymptotic performance for the case of $\beta \in (1, 2)$ was presented in Theorem~\ref{thm:zipf12u}. We also simulate the performance of Knapsack Storage + Online Random Routing (KS+ORR), where the ORR delivery policy is as described earlier. We compare the performance of the KS+MLP and KS+ORR policies as well as the lower bound on the performance of all uncoded policies derived in Theorem~\ref{thm:zipf12}.

As before, we simulate a distributed cache system with content popularity following the Zipf distribution to understand how the performance of the KS+MLP and KS+ORR policies depends on various parameters like number of contents $(n)$, number of caches ($m$), storage capacity per cache $({k})$, and Zipf parameter $(\beta)$. We focus on the case where the number of requests per time-slot is equal to the number of caches. For each set of system parameters, we report the mean transmission rate averaged over 10000 iterations.

In Figure \ref{fig:knapsack}(i), we plot the mean transmission rates for  the KS+MLP and KS+ORR policies as well as the lower bound on the expected transmission rate as a function of the number of contents ($n$), for a system where the number of caches ($m$) is one fifth of the number of contents ($n = 5m$), and each cache can store three contents (${k}=3$). In this regime, Theorems~\ref{thm:zipf12} and \ref{thm:zipf12u} suggest that the mean transmission rate for the KS+MLP policy is $\OO(n^{2-\beta})$ and the lower bound on the expected transmission rate is $\Omega(n^{2-\beta})$. We see that the mean transmission rates for the KS+MLP and KS+ORR policies as well as the lower bound follow the expected trend.}
%
In Figure \ref{fig:knapsack}(ii), we plot the mean transmission rates for KS+MLP and KS+ORR policies and the lower bound on the expected transmission rate as a function of the storage per cache ({$k$}) for a system with 1000 contents ($n=1000$) and 100 caches ($m=100$). As expected the mean transmission rates for KS+MLP and KS+ORR policies and the lower bound on the expected transmission rate are decreasing functions of $k$. We see that the mean transmission rate for the KS+MLP policy is very close to the lower bound on the expected transmission rate while the KS+ORR\footnote{The plot of the expected rate of the KS+ORR policy in Figure \ref{fig:knapsack}(ii) has a small unexpected jump. As the memory increases the number of files stored in the cache is increasing but the service policy is online and it matches requests randomly. So the chance of high popular file getting requested first is high and can lead to a blocking of the caches hosting the lower popularity files. This leads to the misbehavior in the performance. We also observe similar behavior in Figure \ref{fig:knapsack}(iii) for large values of $\beta$. } policy performs significantly worse.  In Figure \ref{fig:knapsack}(iii), we plot the mean transmission rates for KS+MLP and KS+ORR policies and the lower bound on the expected transmission rate as a function of the Zipf parameter $\beta$. We simulate a system with 1000 contents ($n=1000$) and 200 caches ($m=200$) for two different values of storage per cache. As expected, the mean transmission rates for KS+MLP and KS+ORR policies and the lower bound on the expected transmission rate are decreasing functions of $\beta$. We see that the mean transmission rate for the KS+MLP policy is very close to the lower bound on the expected transmission rate. 
Note that our simulations are for the case where the number of caches ($m$) is equal to the number of files ($n$). But, similar tendency is maintained, when the parameter setting is changed in cases with $n > m$ or $n < m$.

	\section{Conclusions and future work}
\label{sec:conclusions}
In this work we focus on a content delivery system consisting of a central server which communicates over an error-free broadcast channel with multiple co-located caches, each with limited storage and service capabilities. A key feature of this work is that we allow resource pooling across caches which allows a request to be simultaneously served by multiple caches. We propose two policies, ($\emph{i}$) PP+OMR (for $0\leq \beta <1$, where $\beta$ is the Zipf parameter) and  ($\emph{ii}$) KS+MLP (for $1< \beta <2$), and characterize their asymptotic performance. We also derive fundamental lower bounds on the optimal server transmission rate for our system. We conduct extensive simulations to compare the performance of our proposed schemes as well as other natural online and computationally-efficient variants. 
	More specifically, if the popularity does not vary drastically
		across contents, then resource pooling leads to an order wise reduction in central server transmission rate as the system size grows. On the other hand, if the content popularity is skewed, the central server transmission rate is
		of the same order with and without resource pooling.
In this paper, we restrict our analysis to uncoded placement policies. In the future, we plan to study the advantages of coded placement, both in terms of improving system performance as well as providing reliability against cache failures.
	\section{proofs} \label{sec:proof}

\subsection{Proof of Theorem \ref{thm:zipf01}}

 {We characterize the performance of our storage/service policy discussed in Section \ref{sec:results}. 
 	
 \textbf{Case 1:} $c-k=\Theta(1)$: Theorem \ref{thm:zipf01}(a) is trivial in this case.
 
 We use the following lemmas to prove Theorem \ref{thm:zipf01}(b).
 	\begin{lemma} \label{lemma:atleastonce}
 	For $X=\text{Bin}(n,p=\frac{c}{n})$, such that $c>0$ is a constant, then, as $n\rightarrow \infty$,
 	$
 	\mathbb{P}(x\geq 1 )=1-e^{-c}.$ 

 \end{lemma}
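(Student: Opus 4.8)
The plan is to reduce the statement to an elementary deterministic limit. Since $X \sim \text{Bin}(n, p)$ with $p = c/n$, the probability of the complementary event is $\mathbb{P}(X = 0) = (1-p)^n = (1 - c/n)^n$, so that $\mathbb{P}(X \geq 1) = 1 - (1 - c/n)^n$. The entire lemma therefore follows once we show $\lim_{n \to \infty}(1 - c/n)^n = e^{-c}$.

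To establish this limit I would take logarithms: for all $n > c$ we have $1 - c/n \in (0,1)$, and $\ln\big((1 - c/n)^n\big) = n \ln(1 - c/n)$. Using the expansion $\ln(1 - x) = -x - x^2/2 - \cdots$, valid for $|x| < 1$, with $x = c/n \to 0$, gives $n \ln(1 - c/n) = n\big(-c/n + O(1/n^2)\big) = -c + O(1/n) \to -c$. By continuity of $\exp$, $(1 - c/n)^n = \exp\big(n \ln(1 - c/n)\big) \to e^{-c}$, and hence $\mathbb{P}(X \geq 1) \to 1 - e^{-c}$ as $n \to \infty$.

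There is essentially no real obstacle; the only point worth flagging is that the conclusion is an asymptotic statement (convergence as $n \to \infty$), which is exactly the form in which it is invoked later — for instance, when estimating the expected number of distinct contents requested in a batch. As an alternative one could simply cite the standard Poisson limit theorem, $\text{Bin}(n, c/n) \Rightarrow \text{Poisson}(c)$, and read off $\mathbb{P}(X \geq 1) \to 1 - \mathbb{P}(\text{Poisson}(c) = 0) = 1 - e^{-c}$; but the direct computation above is shorter and fully self-contained, so that is the route I would take.
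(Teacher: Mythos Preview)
Your proof is correct. The paper actually states this lemma without proof, treating it as a standard fact, so there is nothing to compare against; your direct computation via $\mathbb{P}(X\geq 1)=1-(1-c/n)^n\to 1-e^{-c}$ is exactly the expected elementary argument.
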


\begin{lemma}\label{lemma:leastpopular}
	Let $P=\{p_1,p_2,...,p_n\}$ be the Zipf distribution with parameter $\beta$. 
	Then for $\beta\in[0,1)$, $p_i\geq\frac{1-\beta}{n}$ $\forall i$. 
\end{lemma}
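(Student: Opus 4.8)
Proof proposal for Lemma~\ref{lemma:leastpopular}.

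The plan is to work directly with the closed form of the Zipf pmf, $p_i = i^{-\beta}\big/\big(\sum_{j=1}^{n} j^{-\beta}\big)$, and to lower bound the smallest of these probabilities. Since the map $i \mapsto i^{-\beta}$ is nonincreasing for $\beta \geq 0$, the sequence $p_1 \geq p_2 \geq \cdots \geq p_n$ is nonincreasing, so it suffices to establish $p_n \geq (1-\beta)/n$; every other $p_i$ is then at least as large.

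For this it is enough to upper bound the normalizing constant $Z_n := \sum_{j=1}^{n} j^{-\beta}$ by $n^{1-\beta}/(1-\beta)$, because then $p_n = n^{-\beta}/Z_n \geq n^{-\beta}(1-\beta)/n^{1-\beta} = (1-\beta)/n$, which is the claim. I would obtain the bound on $Z_n$ by a standard integral comparison: since $x^{-\beta}$ is decreasing on $(0,\infty)$ for $\beta \geq 0$, we have $j^{-\beta} \leq \int_{j-1}^{j} x^{-\beta}\,dx$ for each $j \geq 1$, and summing over $j$ from $1$ to $n$ yields $Z_n \leq \int_{0}^{n} x^{-\beta}\,dx = n^{1-\beta}/(1-\beta)$, where the integral is finite precisely because $\beta < 1$ (so the contribution of the lower endpoint $0$ vanishes).

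The only point worth a remark is the boundary case $\beta = 0$, for which the statement reads $p_i \geq 1/n$; this holds with equality since the distribution is then uniform, and the argument above specializes to it consistently. I do not anticipate any real obstacle: the lemma is an elementary consequence of the monotonicity of $x^{-\beta}$ together with the convergence of $\int_{0}^{n} x^{-\beta}\,dx$ for $\beta \in [0,1)$.
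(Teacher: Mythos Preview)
Your argument is correct: reducing to the smallest mass $p_n$, writing $p_n = n^{-\beta}/Z_n$, and bounding the partition function via $Z_n \le \int_0^n x^{-\beta}\,dx = n^{1-\beta}/(1-\beta)$ gives exactly the claim, and the integral comparison is valid since $x^{-\beta}$ is decreasing on $(0,\infty)$. The paper states this lemma without proof, so there is no alternative argument to compare against; what you have written is the natural one-line justification the authors presumably had in mind.
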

%

\begin{proof} (Proof of Theorem \ref{thm:zipf01}(b)) Let $c-k=x$. Any uncoded storage policy can't store more than $mk$ units of data at least once due to memory constraint. From Lemmas \ref{lemma:atleastonce} and \ref{lemma:leastpopular}, the unstored $n-mk=xm$ units of data is requested at least once with probability $\geq 1-e^{-\frac{(1-\beta)\rho}{c}}$. Hence, 
	$\EE[R_{z_1}^*]\geq x\Big(1-e^{-\frac{(1-\beta)\rho}{c}}\Big)m.$
\end{proof}
\hspace{0.25in}

\textbf{Case 2:} {$k-c=\Theta(1)$}: This proof uses ideas from the proof of Proposition 1 in \cite{LLM12} which looks at the setting where each cache can serve at most one request. We first compute a lower bound on the probability that there exists a fractional matching between the set of sub-requests and the caches such that the total data served by each cache is less than $1$ unit. By the total unimodularity of adjacency matrix, the existence of a fractional matching implies the existence of an integral matching \cite{LLM12}. Since each sub-request is for $1/a$ units of data, to ensure that the total data served by each cache is less than $1$ unit in the integral matching, each cache will be allocated not more than $a$ sub-requests. The integral matching thus satisfies the restrictions discussed in Section~\ref{sec:setting} and therefore is a valid allocation of requests to caches. 

\begin{proof} (Proof of Theorem \ref{thm:zipf01}(a)): Recall our storage policy in Section \ref{sec:results}. We divide each file into $a$ sub-files of equal size and the number of caches storing a sub-file is proportional to its popularity, $i.e.$, for all $i$, each sub-file of Content $i$ is stored on $d_i\approx mkp_i$ caches. 
	  \remove{Let $p^*=\frac{{1-\beta}}{n}$. From Lemma \ref{lemma:leastpopular}, $p_i\geq p^*\ \forall i.$}
	   Recall that the number of requests for Content $i$ in a time-slot, denoted by $b_i$, is Bin($r,p_i$), where, $p_i={p_1}{i^{-\beta}}$. We treat each request for a file as $a$ sub-requests, one for each of the $a$ sub-files. 
	
	For each sub-file of Content $i$ and each of the corresponding $b_i$ sub-requests, we split each sub-request into $d_i$ sub-sub-requests of size $\frac{1}{d_i}$ each. 
	Let $\partial s$ denote the set of sub-files stored on Cache $s$. For each $i \in \partial s$, we associate $b_i$ sub-sub-requests for Sub-file $i$ to Cache $s$. This allocation leads to a fractional matching where the the total data served by each cache is less than $1$ unit if $\forall s\in \{ 1,2,...,m\}$,	
	$
	\sum_{i\in \partial s}\frac{b_i}{d_i}\leq a \implies
	\sum_{i\in \partial s}\frac{b_i}{ad_i}\leq 1,$ and,	
	\begin{align}
	&\mathbb{P}\bigg(\sum_{i\in \partial s}\frac{b_i}{ad_i}>1\bigg)\leq \inf_{s>0}\text{ }\frac{\mathbb{E}\bigg[e^{s\sum_{i\in \partial s}\frac{b_i}{ad_i}}\bigg]}{e^s}\nonumber\\  
	&\leq \inf_{s>0}\text{ } e^{-s} \prod_{i\in \partial s}
	e^{r\ln\big(p^*e^\frac{s}{amkp^*}+1-p^*\big)} 
	=e^{-\frac{ak(1-\beta)\rho}{c}h(\frac{1}{\rho})},\nonumber
	\end{align}
	where, $p^*=\frac{{1-\beta}}{n}$, $h(x)=x\ln{x}-x+1$ is the Cramer transform of a unit Poisson random variable. \remove{Here, the second inequality is due to $b_i$'s are negatively associated random variables \cite{dubhashi1996balls}.}
	 {If we broadcast the data of the caches with $\sum_{i\in \partial s}\frac{b_i}{ad_i}>1$, the requests for all other files can be served via the caches since these requests can be matched to caches while ensuring $\sum_{i\in \partial s}\frac{b_i}{ad_i}\leq1$ for each of them. Hence, there exists a fractional matching which, as discussed before, implies the existence of an integral matching for the remaining caches and requests. Therefore, the expected transmission  rate is $\mathbb{E}[R_{z_1}]\leq mk \mathbb{P}\bigg(\sum_{i\in \partial s}\frac{b_i}{ad_i}>1\bigg)\leq mke^{-\frac{ak(1-\beta)\rho}{c}h\big(\frac{1}{\rho}\big)}. $}
	   \remove{Taking the Union bound over all $m$ caches,
	$
	\mathbb{P}\big(\text{matching exists}\big) \geq 1-me^{-\frac{ak(1-\beta)\rho}{c}h\big(\frac{1}{\rho}\big)}.
	$
	Then, 
	\begin{align*}
	\mathbb{E}[R_{z_1}] &\leq 0 \times \mathbb{P}\big(\text{matching exists}\big) + r \times \big(1- \mathbb{P}\big(\text{matching exists}\big)\big) \leq rme^{-\frac{ak(1-\beta)\rho}{c}h\big(\frac{1}{\rho}\big)}.
	\end{align*}}	
\end{proof}

 Next, we obtain a lower bound on the transmission rate for any storage/service policy for $0\leq\beta<1$.
 We use the following lemma to prove Theorem~\ref{thm:zipf01}(b).
 \begin{lemma}
 	\label{lemma:total_memory}
 	In a system with $n$ files of size $1$ unit and $m$ caches of size $k$ units each, at most $\frac{m}{2}$ {units of data}  can be stored at least $2k$ times each. (Note: Proof by contradiction)
 \end{lemma}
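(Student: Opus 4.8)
The plan is a double-counting argument run by contradiction, exactly as the hint suggests. The only resource bound I need is the aggregate storage budget: there are $m$ caches, each of capacity $k$ units, so the total volume of data held across the whole system, counted \emph{with multiplicity} (i.e.\ summing the sizes of all stored parts over all caches), is at most $mk$ units.

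First I would quantify the heavily-replicated data. Let $D$ denote the total size, in units, of the portion of the catalog that is stored on at least $2k$ distinct caches. Each such unit of data is counted at least $2k$ times in the aggregate stored volume (once for every cache holding a copy of it), so the aggregate stored volume is at least $2kD$. Combining this with the budget gives $2kD \le mk$, hence $D \le m/2$, which is precisely the claim. Phrased as a contradiction: if $D > m/2$ then the copies of this data alone occupy strictly more than $2k\cdot(m/2) = mk$ units of cache space, exceeding the total capacity of the $m$ caches — impossible.

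I do not anticipate a genuine obstacle here; the argument is a one-line pigeonhole count. The only point meriting a touch of care is that files may be split into arbitrarily small parts, so $D$ should be read as a (continuous) measure of data volume rather than a count of whole files; the inequality $2kD \le mk$ is an identity about summed volumes and is unaffected by this, so the bound goes through verbatim.
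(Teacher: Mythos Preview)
Your proposal is correct and is exactly the argument the paper has in mind: the parenthetical ``(Note: Proof by contradiction)'' is the entirety of the paper's treatment, and your double-counting of total stored volume ($2kD \le mk \Rightarrow D \le m/2$) is the intended one-line contradiction. Your remark that $D$ should be read as a measure of data volume (since files may be split) is a useful clarification the paper leaves implicit.
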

 
 To prove Theorem\ref{thm:zipf01}(b), instead of lower bounding the expected transmission rate of the original system, we lower bound the expected transmission rate for an alternative system, which is less restrictive, and therefore, more powerful than the original system described in Theorem \ref{thm:zipf01}. 
 
 \begin{proof} (Proof of Theorem \ref{thm:zipf01}(b)): In the original system the total output rate of each cache is restricted to at most $1$ unit, and each cache can serve at most $a$ requests. We consider an alternative system (System A), in which we allow each cache to serve multiple requests in each time slot, as long as it serves at most $a$ requests for each content stored in the cache.  Let $\mathbb{E}[R^*_A]$ be the expected transmission rate in the alternative system. Since, the alternative system is less restrictive than the original system, it follows that $\mathbb{E}[R_{z_1}^*]\geq \mathbb{E}[R^*_A]$.
 	
 	
 	From Lemma \ref{lemma:total_memory}, we can conclude that there exist at least $n-\frac{m}{2}$ {units of data} which stored not more than $2k$ times each. Each of these $n-\frac{m}{2}$ {units of data} is requested with probability $\geq p_n$. Consider another system (System B) in which these $n-\frac{m}{2}$ {units of data} are requested with uniform probability $p_n=\frac{p_1}{n^\beta}$. Let $\mathbb{E}[R^*_B]$ be the expected transmission rate in System B. Then, $\mathbb{E}[R_A^*]\geq \mathbb{E}[R^*_B]$. From \remove{\eqref{eq:3.1} and \eqref{eq:3.2}}Lemma \ref{lemma:leastpopular}, we know that,
 	$
 	p_n \geq \frac{{1-\beta}}{n}. \nonumber
 	$
 	Now, consider a new system (System C), in which these $n-\frac{m}{2}$ {units of data} are requested with probability $\frac{{1-\beta}}{n}$. Let $\mathbb{E}[R^*_C]$ be the expected transmission rate in System C. It follows that
 	$\mathbb{E}[R^*_B]\geq \mathbb{E}[R^*_C].$
 	
 	 The number of requests for Content $i$ in new system, denoted by $b_i'$, is Bin($r,\frac{1-\beta}{n}$).
   Then, for  large enough $n$, we have that,
 	\begin{align}
 	\mathbb{P}(b_j'>2ak)&=\sum_{i=2ak+1}^{r} {{r} \choose i}\bigg({\frac{1-\beta}{n}}\bigg)^i\bigg(1-\frac{1-\beta}{n}\bigg)^{{r}-i}\nonumber\\
 	&\geq e^{\frac{(1-\beta)\rho}{c}}\frac{\big(\frac{(1-\beta)\rho}{c}\big)^{2ak+1}}{2(2ak+1)!}. \nonumber
 	\end{align}
 	Since, there are at least \big($n-\frac{m}{2}$\big) such units, the expected central server's transmission rate can be lower bounded as follows: \\
 	 $\mathbb{E}[R_{z_1}^*]\geq  \mathbb{E}[R^*_C]
 	\geq \Big(n-\frac{m}{2}\Big)e^{\frac{(1-\beta)\rho}{c}}\frac{\Big(\frac{(1-\beta)\rho}{c}\Big)^{2ak+1}}{2(2ak+1)!}.$
 \end{proof}

\subsection{Proof of Proposition \ref{prop:zipf12}}
Next, we consider the case $\beta\in (1,2)$. We use the following lemma to prove Proposition \ref{prop:zipf12}.

\begin{lemma} \label{lemma:binomial}
For a Binomial random variable $X=\text{Bin}(m,p)$ s.t., $mp\geq 1$,
$
\lim_{m\rightarrow \infty} \mathbb{P}(x\geq  \lfloor mp\rfloor )\geq \frac{1}{2}.
$ \big(Recall: median($X$)$\geq\lfloor mp\rfloor$\big)
\end{lemma}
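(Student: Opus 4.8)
The plan is to reduce the statement to the classical bound on the median of a binomial law. Recall that if $\nu$ is a median of $X \sim \text{Bin}(m,p)$, i.e. $\mathbb{P}(X \le \nu) \ge \frac{1}{2}$ and $\mathbb{P}(X \ge \nu) \ge \frac{1}{2}$, then one can always choose $\nu$ with $\lfloor mp \rfloor \le \nu \le \lceil mp \rceil$; this is a standard fact (Kaas and Buhrman). Granting it, the lemma follows in one line: since $\lfloor mp \rfloor \le \nu$, the event $\{X \ge \lfloor mp \rfloor\}$ contains $\{X \ge \nu\}$, hence $\mathbb{P}(X \ge \lfloor mp \rfloor) \ge \mathbb{P}(X \ge \nu) \ge \frac{1}{2}$. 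Note this actually holds for every finite $m$; the hypothesis $mp \ge 1$ is only there to ensure $\lfloor mp \rfloor \ge 1$ (which is how the lemma is used downstream), and the ``$\lim_{m\to\infty}$'' in the statement is not really needed.

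If one wishes to avoid quoting the median bound, I would instead use a central-limit argument, which suffices precisely because the statement is asymptotic. Write $X = \sum_{i=1}^m X_i$ with $X_i$ i.i.d. Bernoulli($p$), so $\mathbb{E}[X] = mp$ and $\text{Var}(X) = mp(1-p)$. Since $\lfloor mp \rfloor \le mp$, we have $\mathbb{P}(X \ge \lfloor mp \rfloor) \ge \mathbb{P}(X \ge \lceil mp \rceil)$, and because $X$ is integer-valued, $\mathbb{P}(X \ge \lceil mp \rceil) = \mathbb{P}\big( (X - mp)/\sqrt{mp(1-p)} \ge (\lceil mp \rceil - mp)/\sqrt{mp(1-p)} \big)$; as $m \to \infty$ the threshold on the right tends to $0$ (its numerator is $< 1$ while the denominator diverges, using that $p$ is bounded away from $0$ and $1$ in the intended application), so by the De Moivre--Laplace theorem this probability tends to $\mathbb{P}(Z \ge 0) = \frac{1}{2}$, where $Z$ is standard normal. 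This gives $\liminf_{m\to\infty}\mathbb{P}(X \ge \lfloor mp \rfloor) \ge \frac{1}{2}$.

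There is no real obstacle here: the lemma is essentially a repackaging of the binomial-median bound (or a one-line CLT estimate), so the only thing to watch is the case distinction between $mp \in \mathbb{Z}$ and $mp \notin \mathbb{Z}$ --- in the integer case $\lceil mp \rceil = \lfloor mp \rfloor = mp$ and the CLT threshold is exactly $0$, while in the non-integer case it is a vanishing positive quantity --- and making sure one invokes the correct inequality direction ($\nu \ge \lfloor mp\rfloor$, which lower-bounds $\mathbb{P}(X\ge\lfloor mp\rfloor)$, rather than the companion bound $\nu \le \lceil mp\rceil$, which would be useless here).
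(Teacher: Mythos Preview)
Your proposal is correct, and your primary argument via the binomial median bound $\lfloor mp\rfloor \le \text{median}(X)$ is exactly the justification the paper gives (the parenthetical ``Recall'' in the statement is the entire proof in the paper). Your additional CLT route is a nice self-contained alternative but is not needed to match the paper's approach.
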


\begin{proof}{ (Proof of Proposition \ref{prop:zipf12})}
From Lemma \ref{lemma:binomial}, it is clear that, if {Bit $u$ of Content $i$ }s.t., ${r}p_i\geq a$ is stored on less than $\frac{{\lfloor r}p_i\rfloor}{a}$ caches, it will have to be fetched from the central server with probability $\geq 0.5$. Therefore, if we are interested in the order of the number of contents transmitted by the central server, storing {Bit $u$ of Content $i$ } on fewer than $\frac{{\lfloor r}p_i\rfloor}{a}$ caches is equivalent to not storing it at all. Therefore, to make the most use of the available cache memory, we restrict ourselves to the case where if caching policy decides to cache {Bit $u$ of Content $i$ } it is stored on at least $\max\big\{ \big\lfloor\frac{{r}p_i}{a}\big\rfloor,1\big\}$ caches.

If the caching policy decides not to cache {Bit $u$ of Content $i$ }, the central server will transmit this content if it is requested at least once in the batch of ${r}$ requests, $i.e.,$ with probability $\geq 1-(1-p_i)^{r}$.  Let $x_{i,u}=1$ imply that {Bit $u$ of Content $i$ } is cached and $x_{i,u}=0$ otherwise. 
\begin{align*}
&\mathbb{E}[{ R}^*_{NC}]=  \Omega\bigg(\min \sum_{i=1}^{n}\sum_{u=1}^{b_i}(1-x_{i,u})\Big(1-(1-p_i)^{r}\Big)\bigg)\\
&\text{s.t. } \sum_{i=1}^n\sum_{u=1}^{b_i}x_{i,u}\max\Big\{ \Big\lfloor\frac{{r}p_i}{a}\Big\rfloor,1\Big\}\leq mkb,\& \ x_{i,u} =\{ 0,1\} \text{, }\forall i. \\
&\text{Let }\text{O}_1^*=  \max \sum_{i=1}^{n}\sum_{u=1}^{b_i}x_{i,u}\Big(1-(1-p_i)^{r}\Big)\\
& \text{s.t. }  \sum_{i=1}^n\sum_{u=1}^{b_i}x_{i,u}\max\Big\{ \Big\lfloor\frac{{r}p_i}{a}\Big\rfloor,1\Big\}\leq mkb\text{ } \&\text{ }
  x_{i,u} =\{ 0,1\} \text{, }\forall i. 
\end{align*}
\begin{align*}
&\text{O}_1^* \leq \text{O}^*=  \max \sum_{i=1}^{n}\sum_{u=1}^{b_i}x_{i,u}\Big(1-(1-p_i)^{r}\Big) \\
& \text{s.t. } \sum_{i=1}^n\sum_{u=1}^{b_i}x_{i,u}\max\Big\{ \Big\lfloor\frac{{r}p_i}{a}\Big\rfloor,1\Big\}\leq mkb \text{ }  \& \text{ }0\leq x_{i,u} \leq 1 \text{, }\forall i.
\end{align*}


\begin{align*}
\therefore \mathbb{E}[R^*_{NC}]= \Omega\bigg( \sum_{i=1}^{n}b_i\Big(1-(1-p_i)^{r}\Big)-\text{O}_1^*\bigg)\\
 = \Omega\bigg( \sum_{i=1}^{n}b_i\Big(1-(1-p_i)^{r}\Big)-\text{O}^*\bigg).
\end{align*}
\end{proof}

\subsection{Proof of Theorem \ref{thm:zipf12}}
We use the following lemmas to prove Theorem \ref{thm:zipf12} and Theorem \ref{thm:zipf12u}.

\begin{lemma}
	\label{lemma:chernoff}
	For a Binomial random variable $X$ with mean $\mu$, by the Chernoff bound, $\forall$ $\delta\geq0$,
	\begin{align*}
		 \PP(X \geq (1+\delta) \mu) &\leq \Bigg(\frac{e^\delta}{(1+\delta)^{(1+\delta)}}\Bigg)^\mu, \\ 
		 \PP(X \leq (1-\delta) \mu)& \leq e^{-\delta^2 \mu/2}.
	\end{align*}
\end{lemma}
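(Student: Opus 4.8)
The plan is to establish both tail bounds by the standard Chernoff (exponential moment) method. Write $X=\text{Bin}(N,p)$ with $\mu=Np$ as a sum $X=\sum_{j=1}^N X_j$ of independent Bernoulli$(p)$ variables. The crucial ingredient is the moment generating function: by independence, $\EE[e^{tX}]=\prod_{j=1}^N(1-p+pe^t)=(1+p(e^t-1))^N$, and since $1+x\le e^x$ for all real $x$, this is bounded by $e^{\mu(e^t-1)}$ for every $t$. Everything else follows by applying Markov's inequality to a suitable exponential transform and optimizing the free parameter $t$.

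For the upper tail I would fix $t>0$ and apply Markov's inequality to $e^{tX}$, obtaining
\[
\PP(X\ge(1+\delta)\mu)\le \frac{\EE[e^{tX}]}{e^{t(1+\delta)\mu}}\le e^{\mu\left(e^t-1-t(1+\delta)\right)}.
\]
Minimizing the exponent over $t>0$ gives the stationarity condition $e^t=1+\delta$, i.e. $t=\ln(1+\delta)$; substituting this back yields exactly $e^{\mu\left(\delta-(1+\delta)\ln(1+\delta)\right)}=\left(\frac{e^\delta}{(1+\delta)^{1+\delta}}\right)^\mu$, which is the first claimed bound. This direction is essentially mechanical once the optimizing $t$ is identified.

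For the lower tail I would apply the same technique to $e^{-tX}$ with $t>0$, giving
\[
\PP(X\le(1-\delta)\mu)\le \frac{\EE[e^{-tX}]}{e^{-t(1-\delta)\mu}}\le e^{\mu\left(e^{-t}-1+t(1-\delta)\right)}.
\]
For $0\le\delta<1$ the exponent is minimized at $e^{-t}=1-\delta$, which produces the intermediate bound $\left(\frac{e^{-\delta}}{(1-\delta)^{1-\delta}}\right)^\mu$. The boundary and large-deviation regimes are immediate: for $\delta\ge 1$ the event $\{X\le(1-\delta)\mu\}$ has probability zero (or equals $\PP(X=0)=(1-p)^N\le e^{-\mu}$ at $\delta=1$), so the claim holds trivially there.

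The only non-routine step, and hence the main obstacle, is passing from this intermediate bound to the cleaner form $e^{-\delta^2\mu/2}$. Taking logarithms, this reduces to the elementary inequality $(1-\delta)\ln(1-\delta)\ge -\delta+\tfrac{\delta^2}{2}$ on $[0,1)$. I would prove it by setting $f(\delta)=(1-\delta)\ln(1-\delta)+\delta-\tfrac{\delta^2}{2}$, observing $f(0)=0$, and computing $f'(\delta)=-\ln(1-\delta)-\delta$; since $-\ln(1-\delta)=\sum_{j\ge 1}\tfrac{\delta^j}{j}\ge\delta$ for $\delta\in[0,1)$, we get $f'(\delta)\ge 0$, so $f$ is nondecreasing and $f(\delta)\ge 0$ throughout. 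This delivers the second bound and completes the proof.
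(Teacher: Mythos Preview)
Your derivation is correct and follows the standard exponential-moment argument for Chernoff bounds. The paper itself does not supply a proof of this lemma; it simply states the two inequalities as a known result (``by the Chernoff bound'') and moves on, so there is nothing to compare against beyond noting that your proof is the textbook one the paper is implicitly invoking.
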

\begin{lemma}
	\label{lemma:Zipf_d}
	Let content popularity follow the Zipf distribution with Zipf parameter $\beta>1$. In a given time-slot, let $d_i$ be the number of requests for Content $i$. Let $E_1$ be the event that:
	\begin{enumerate}
		{\item[(a)] $ d_i \geq 1$ for $i=\OO(m^{\frac{1}{\beta}-\epsilon})$, where $\epsilon>0$ is arbitrarily small constant,}
		\item[(b)] $d_i \leq 2p_1 (\log m)^2$ for $n_1 < i \leq n_2$,
		\item[(c)] $d_i \leq \bigg(1+\dfrac{p_1}{4}\bigg) mp_i$ for $1 \leq i \leq n_1$,
	\end{enumerate}
	where $n_1$ and $n_2$ are as defined in Equation \ref{eq:weight1}.
	Then, 
	$\PP(E_1) = 1-\OO(n e^{-(\log m)^{2}}).$
\end{lemma}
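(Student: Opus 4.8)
The plan is to establish each of the three events $(a)$, $(b)$, $(c)$ separately and then combine them with a union bound, showing that the total failure probability is $\OO(n e^{-(\log m)^2})$. Recall that $d_i = \mathrm{Bin}(r, p_i)$ with $r = \rho m$ and $p_i = p_1 i^{-\beta}$, so $\mathbb{E}[d_i] = r p_i = \rho m p_1 i^{-\beta}$. The key tool throughout is the Chernoff bound of Lemma \ref{lemma:chernoff}; the work is in choosing the deviation parameter $\delta$ for each regime so that the resulting bound is at most $e^{-(\log m)^2}$ (up to the constant absorbed by $\OO(\cdot)$), after which the union bound over at most $n$ contents costs only a factor $n$.

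For part $(c)$, the range $1 \le i \le n_1$ with $n_1 = (rp_1)^{1/\beta}/(\log m)^{2/\beta}$ is exactly the range where the mean $\mu_i = rp_i = rp_1 i^{-\beta}$ is at least $(\log m)^2$. Applying the upper-tail Chernoff bound with $\delta = p_1/4$ (a constant), $\PP(d_i \ge (1+p_1/4)\mu_i) \le \exp(-\Theta(\mu_i)) \le \exp(-\Theta((\log m)^2))$ uniformly over this range; a union bound over the at most $n_1 \le n$ such indices gives failure probability $\OO(n e^{-\Theta((\log m)^2)})$. For part $(b)$, in the range $n_1 < i \le n_2$ the mean $\mu_i = rp_i$ satisfies $\mu_i < (\log m)^2$ (by definition of $n_1$), so I would apply the upper-tail bound with $1+\delta = 2p_1(\log m)^2/\mu_i > 2p_1$; since $\mu_i$ can be as small as $\OO(1)$, here I use the crude form $\PP(d_i \ge t) \le (e\mu_i/t)^t$ with $t = 2p_1(\log m)^2$, which gives something like $(e/(2(\log m)^2))^{2p_1(\log m)^2} \le e^{-\Theta((\log m)^2 \log\log m)}$, comfortably beating $e^{-(\log m)^2}$; union bound over $\le n_2 \le n$ indices. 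For part $(a)$, for $i = \OO(m^{1/\beta - \epsilon})$ we have $\mu_i = rp_i = \Omega(m^{\beta\epsilon}) = \Omega(m^{\Theta(1)})$, which grows polynomially in $m$; the lower-tail Chernoff bound with $\delta = 1$ gives $\PP(d_i = 0) \le \PP(d_i \le 0\cdot\text{n/a})$ — more carefully, $\PP(d_i \le (1-\delta)\mu_i)$ with $\delta$ close to $1$, or simply $\PP(d_i = 0) = (1-p_i)^r \le e^{-rp_i} \le e^{-\Omega(m^{\beta\epsilon})}$, which is super-polynomially small and dominates $e^{-(\log m)^2}$; union bound over $\OO(m^{1/\beta-\epsilon}) \le n$ indices.

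The main obstacle I anticipate is bookkeeping the thresholds so that the three ranges line up consistently with the definitions of $n_1$ and $n_2$ in Equation \ref{eq:weight1}, and in particular verifying that in regime $(b)$ the chosen deviation $t = 2p_1(\log m)^2$ genuinely exceeds $\mu_i$ for every $i > n_1$ (so that the Chernoff upper tail applies with $\delta > 0$), and that the bound stays below $e^{-(\log m)^2}$ even at the smallest means $\mu_i = \Theta(1)$. A secondary subtlety is that the $d_i$ across different contents are not independent — they are multinomially distributed — but this does not affect the argument: each marginal $d_i$ is exactly $\mathrm{Bin}(r,p_i)$, the Chernoff bounds are applied marginally, and the union bound requires no independence. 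Once each of the three pieces is bounded by $\OO(n e^{-(\log m)^2})$ (in fact the $(a)$ and $(b)$ pieces are much smaller), summing the three contributions yields $\PP(E_1^c) = \OO(n e^{-(\log m)^2})$, i.e. $\PP(E_1) = 1 - \OO(n e^{-(\log m)^2})$, as claimed.
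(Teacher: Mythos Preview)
Your approach is essentially identical to the paper's: apply the Chernoff bound of Lemma~\ref{lemma:chernoff} separately in each of the three regimes (using that $\mu_i = rp_i = \Omega(m^{\epsilon\beta})$ in (a), $\mu_i \le (\log m)^2$ in (b), and $\mu_i = \Omega((\log m)^2)$ in (c)), and then take a union bound over the at most $n$ contents. The bookkeeping concern you flag about part (b) is treated in the paper at the same level of brevity---it simply asserts $\PP\big(d_i > 2p_1(\log m)^2\big)=\OO(e^{-(\log m)^2})$ via Lemma~\ref{lemma:chernoff}---so you are not missing any ingredient the paper actually supplies.
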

\begin{proof} 	
	Since content popularity follows the Zipf distribution with Zipf parameter $\beta>1$,
	
	\begin{enumerate} 
		{\item[(a)] For all $i=\OO(m^{\frac{1}{\beta}-\epsilon})$, expected number of requests is $\OO(m^{\epsilon \beta})$, and
			$
				\PP\big(d_i < 1 \big) = \OO\Big(e^{-m^{\epsilon\beta}}\Big).
		$}
		
		\item[(b)] For all contents less popular than Content $n_1$, 
	$
			p_{i} \leq \frac{p_1 (\log m)^2}{m}.
	$
		Therefore, by the Chernoff bound (Lemma \ref{lemma:chernoff}), we have that, for $n_1 < i \leq n_2$,
		$
			\PP\big(d_i > 2p_1 (\log m)^2 \big) = \OO(e^{-(\log m)^{2}}).
		$
		\item[(c)] For $i \leq n_1$, $mp_i = \Omega((\log m)^{2})$, by the Lemma \ref{lemma:chernoff},
		$
			\PP\bigg(d_i > \bigg(1+\dfrac{p_1}{4}\bigg) mp_i \bigg) = \OO(e^{-mp_i}). 
	$
	\end{enumerate}
	\noindent Therefore, by the union bound over all contents, we have that, 
	$\PP(E_1) = 1-\OO(n e^{-(\log m)^{2}}).$
\end{proof}

\begin{proof} (Proof of Theorem \ref{thm:zipf12})\\
\textbf{Case 1:} $c-k>\Theta(1)$: Consider a new system with one cache of size $mk$ units which can serve all the requests for the stored contents. It is clear that a lower bound on the transmission rate in the new system is also a lower bound on the transmission rate of the original system.\\
In the new system, we can store at most $mk$ files. Therefore, all requests for the $n-mk$ files that are not stored have to be served by the central server. Therefore,
\begin{align*}
\mathbb{E}[R^*_{z_2}]&\geq \int_{n-mk+1}^{n} \Bigg(1-\Big(1-\frac{p_1}{i^\beta}\Big)^{r}\Bigg)di 
=\Omega\Big(n^{(2-\beta)}\Big).
\end{align*}
\textbf{Case 2:} $k = c$, $a=m^{\gamma}$: We use Proposition \ref{prop:zipf12} to prove this result. It can be shown that the optimal solution to $O^*$ has the following structure: $\exists$ $i_{\min} \geq 1$ and $i_{\max} \leq n$, such that, $x_i = 1$ if $i_{\text{min}} < {{i}} < i_{\text{max}}$, $x_{i_{\text{min}}} = f_1$ where $0\leq f_1 \leq 1$, $x_{i_{\text{max}}} = f_2$ where $0\leq f_2 \leq 1$, and $x_i = 0$ otherwise.   
%
Let $\widetilde i=\Big\lceil\big(\frac{rp_1}{2a}\big)^{\frac{1}{\beta}}\Big\rceil$. {By the definition of the fractional Knapsack problem}, 
\begin{align}
& f_1\frac{rp_1}{ai_{\min}^{\beta}}+\sum_{i=i_{\min}+1}^{\widetilde i-1}\Big\lfloor\frac{rp_i}{a}\Big\rfloor+\sum_{i=\widetilde i}^{i_{\max}-1}1+f_2 = mk, \nonumber 
\end{align}
\begin{align}
\therefore  i_{max}\leq mk+3\widetilde{i}&-f_1\frac{rp_1}{ai_{\min}^{\beta}} + \frac{rp_1}{a(\beta-1)}\times \nonumber \\
&\bigg[-(i_{\min}+1)^{(-\beta+1)}+{(\widetilde i-1)}^{(-\beta+1)}\bigg].\nonumber  
\end{align}

Let $i_{\min}=m^{\alpha}$. Recall that the fractional Knapsack solution has $i_{\min}\leq \widetilde{i}$. Hence, $\alpha \leq \frac{1-\gamma}{\beta}$. If $\alpha < \frac{1-\gamma}{\beta}$, $i_{\max}=n(1-\oo(1))$ and,  $\frac{i_{\max}}{n} =1-c_1m^{-\alpha \beta+\alpha-\gamma}(1-\oo(1))$ for some $c_1>0$. {Let $\EE[R_1^*]$ denote the expected number of contents requested at least once that are more popular than Content $i_{\min}$. } By Lemma \ref{lemma:Zipf_d} Part (a), $\EE[R_1^*]=m^{\alpha}$.
Let $\EE[R_2^*]$ denote the expected number of contents requested at least once that are less popular than Content $i_{\max}$.
\begin{align}
\EE[R_{2}^*]&\geq \int_{i_{max}+1}^{n}\frac{{r}p_1}{i^\beta}di 
=\Omega\Big(m^{-\beta +2-\alpha \beta+\alpha-\gamma}\Big). \nonumber 
\end{align}
\begin{align}
\therefore \EE[R_{z_2}^*] &\geq \EE[R_{1}^*]+\EE[R_{2}^*] 
\geq \Omega\Big(n^{\frac{2-\beta-\gamma}{\beta}}\Big).\nonumber
\end{align}
If $\alpha=\frac{1-\gamma}{\beta}$, and $i=\oo(m^{\frac{1-\gamma}{\beta}})$ then Content $i$ is not cached. From Lemma \ref{lemma:Zipf_d} Part (a), all these files are requested at least once. Hence, $\forall$ $\epsilon>0$,
$\mathbb{E}[R^*_{z_2}] \geq \Omega\Big(m^{\frac{1-\gamma}{\beta}-\epsilon}\Big),$
i.e.,
$
\mathbb{E}[R^*_{z_2}]  \geq \Omega\Big(n^{\frac{2-\beta-\gamma}{\beta}}\Big). \nonumber
$\\
\textbf{Case 3:} $k-c>\Theta(1)$ --The bound of 0 follows trivially.
 \end{proof}
\subsection{Proof of Theorem \ref{thm:zipf12u}}
We use the following lemmas in the proof of Theorem \ref{thm:zipf12u}. {These lemmas tell that if a file is stored in KS + MLP policy, then all its  requests are served by the caches with high probability.}

\begin{lemma}
	\label{lemma:serve_all_requests}
	Let $\mathcal{R} = \{i: x_i = 1\}$, where $x_i$ is the solution of the fraction Knapsack problem solved in Knapsack Storage: Part 1. Let $E_2$ be the event that the Match Least Popular policy matches all requests for all contents in $R$ to caches. Then, 
	$\PP(E_2) = 1-\OO(m e^{-3\log m}).$
\end{lemma}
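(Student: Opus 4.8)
To prove Lemma~\ref{lemma:serve_all_requests}, the plan is to show that the Match Least Popular (MLP) scheme never has to skip a sub-file of a cached content, except on an event of probability $\OO(me^{-3\log m})$; I would assemble the required good event out of a demand-concentration part and an interference-concentration part.

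First I would condition on the event $E_1$ of Lemma~\ref{lemma:Zipf_d}, which occurs with probability $1-\OO(ne^{-(\log m)^2})$, and augment it with the elementary observation that for $i>n_2$ one has $rp_i=\OO(m^{-\delta})$, so by a Poisson tail bound and a union bound over the $\OO(m)$ such contents, $d_i\le aw_i$ holds for all of them simultaneously outside probability $\OO(m^{-3})$. On this event, matching the realized demand bounds of $E_1$ against the weights in \eqref{eq:weight1}, one checks range by range that each cached content is replicated with a uniform multiplicative margin over its demand: there is an absolute constant $\eta>0$ with $aw_i\ge(1+\eta)d_i$ for every $i\in\mathcal{R}$ (the margin is $\frac{1+p_1/2}{1+p_1/4}-1$ on $(1,n_1]$, a factor $2$ on $(n_1,n_2]$, and likewise elsewhere, using $(1+p_1/4)\rho p_1<1$ for $i=1$). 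I would also record that, by the round-robin rule of Knapsack Storage: Part 2, the $aw_i\le m$ sub-file copies of content $i$ sit on $aw_i$ distinct caches, i.e.\ exactly $w_i$ caches per sub-file.

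Next, fix $i\in\mathcal{R}$, one of its sub-files $j$, and the set $\mathcal{C}$ of $w_i$ caches that hold it. Because MLP processes sub-files from least popular to most popular, the only cache copies consumed on $\mathcal{C}$ before it reaches sub-file $j$ of content $i$ are ones consumed by requests for contents $i'>i$ with $i'\in\mathcal{R}$. The key move is to dominate the number of such consumed copies by the matching-independent quantity $\sum_{i'>i,\,i'\in\mathcal{R}}\min\{d_{i'},\,a\,|\mathcal{C}\cap S_{i'}|\}$, where $S_{i'}$ is the (deterministic) set of $aw_{i'}$ caches holding content $i'$: this is legitimate because content $i'$ generates only $d_{i'}$ sub-requests per sub-file and can load at most $a$ of them onto each shared cache. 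The round-robin layout makes $\mathcal{C}$, within each of the $\OO(ak)$ passes of the placement, an arithmetic progression of step $a$ of span $\OO(aw_i)$, so $\mathbb{E}\,|\mathcal{C}\cap S_{i'}|=\OO(w_i a w_{i'}/m)$ and the mean of the sum above is $\OO\!\left(\frac{aw_i}{m}\sum_{i'>i}d_{i'}\right)$. Since the large weights $w_{i'}$ in \eqref{eq:weight1} force the very popular contents (those with large $rp_{i'}$) out of $\mathcal{R}$, the tail sum $\sum_{i'>i,\,i'\in\mathcal{R}}rp_{i'}$ is $\oo(m)$ — the relevant Zipf tail is $\OO(m\,i^{1-\beta})$ — so this mean is $\oo(aw_i)$, in particular $\oo(d_i)$. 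A Chernoff bound (Lemma~\ref{lemma:chernoff}) on this sum of bounded, negatively associated terms then shows it stays below $aw_i-d_i$ except with probability $\OO(e^{-3\log m})$; hence $\mathcal{C}$ still has at least $d_i$ idle copies when MLP gets to it, and all $d_i$ sub-requests are matched. A union bound over the $\OO(m)$ contents in $\mathcal{R}$, and the $a$ sub-files of each, gives $\PP(E_2)\ge1-\OO(me^{-3\log m})$.

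The hard part is exactly the step of replacing ``copies consumed on $\mathcal{C}$ so far'' — a complicated function of the whole cascade of uniformly-random matching choices, where each cache's residual capacity feeds into the next decision — by something to which a concentration bound applies. The domination above does this by discarding the matching randomness entirely: the co-location pattern between $\mathcal{C}$ and the other contents is frozen by the round-robin rule, and each $d_{i'}$ is an unconditional ceiling on how much of content $i'$ can ever reach $\mathcal{C}$, after which a single Chernoff estimate with the small mean computed above finishes the job. The remaining work — the range-by-range check of the weight margin in the first step, and confirming that the Knapsack optimum does not cache the prohibitively heavy popular contents in the memory regimes at issue — is routine.
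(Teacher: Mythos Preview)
Your high-level plan --- show a uniform margin $aw_i\ge(1+\eta)d_i$ on a good event, then bound the number of cache copies on $\mathcal C$ already consumed by lower-priority contents --- is the same skeleton the paper uses. The execution, however, has a real gap at the interference step.

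The claim $|\mathcal C\cap S_{i'}|=O(w_i\,aw_{i'}/m)$ is not correct for the deterministic round-robin layout. Under Knapsack Storage Part~2, the $aw_{i'}$ sub-file copies of content $i'$ occupy a \emph{contiguous} block of caches (mod $m$), and $\mathcal C$ is an arithmetic progression of step $a$ with $w_i$ terms lying inside a window of length $\approx aw_i$. Whenever the two windows overlap, the intersection is $\Theta(\min\{w_i,w_{i'}\})$, which can exceed your asserted bound by a factor $\Theta(m/\max\{aw_i,aw_{i'}\})$. The formula $w_i\,aw_{i'}/m$ is the expected intersection for \emph{random} placement, not the actual one; with the correct deterministic value your mean computation for the interference sum no longer gives $o(aw_i)$, and the Chernoff step does not deliver the claimed tail. (There is also a minor slip in the domination: content $i'$ contributes $a d_{i'}$ sub-requests, not $d_{i'}$, so the ceiling should be $a\min\{d_{i'},|\mathcal C\cap S_{i'}|\}$.)

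The paper does not try to estimate $|\mathcal C\cap S_{i'}|$ in general. Instead it uses a structural fact that makes the interference essentially one-sided: a direct calculation (equation~\eqref{eq:memoryn2}) shows $\sum_{i=2}^{n_2} aw_i\le(1-p_1/2)m<m$, so under round-robin the contents with indices in $[2,n_2]$ sit on pairwise \emph{disjoint} caches. Hence for any $i\in[2,n_2]$ the only interference on $\mathcal C$ comes from contents with index $>n_2$, whose cumulative popularity per cache is at most $ak\,p_{n_2}=O(ak/m^{1+\delta})$; a Chernoff bound on the resulting per-cache (or per-$\mathcal C$) load, applied range by range ($i>n_2$, then $n_1<i\le n_2$, then $2\le i\le n_1$), yields the $1-O(me^{-3\log m})$ conclusion. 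Content $1$, if cached, is handled separately via $aw_1=m$ and $r<m$. Your argument can be repaired along exactly these lines, but the uniform ``random-intersection'' estimate as written does not go through.
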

\begin{proof}
	{
	Since the Match Least Popular policy matches requests to caches starting from the least popular contents, we first focus on requests for contents less popular than Content $n_2$. Since content popularity follows the Zipf distribution with Zipf parameter $\beta>1$, for $i > n_2$,
	$p_i	<p_{n_2} = \dfrac{p_1 }{m^{(1+\delta)}}.$
	 Since each cache stores at most $a{k}$ contents, the cumulative popularity of all contents less popular than Content $n_2$ stored on a cache is $< a{k}p_{n_2}$. Let $\widetilde{X}$ denote the number of requests for a cache for the contents with index greater than $n_2$. Then, $\EE[\widetilde{X}]=\OO\left(\frac{ak}{m^{\delta}}\right),$ and
	$\PP(\widetilde{X}\geq a)\leq \OO\left(\Big(\frac{e}{m^{\delta}}\Big)^{a}\right).$
	Since, each content is stored on $\big\lceil\frac{4}{a\delta}\big\rceil$ caches, the probability of a content with index greater than $n_2$ being unmatched is $\OO\left(\Big(\frac{e}{m^{\delta}}\Big)^{a\lceil\frac{4}{a\delta}\rceil}\right)$.
	By the union bound, the probability that at least one request for Content $i \in R$ such that $i >n_2$ is not matched by the Match Least Popular policy is $\leq  \OO\left(m\Big(\frac{e}{m^{\delta}}\Big)^{a\lceil\frac{4}{a\delta}\rceil}\right)=\OO\big(m{e^{-3\ln m}}\big)$.
	\remove{Since the Match Least Popular policy matches requests to caches starting from the least popular contents, we first focus on requests for contents less popular than Content $n_2$. Since content popularity follows the Zipf distribution with Zipf parameter $\beta>1$, for $i > n_2$,
	$
		p_{n_2} < \dfrac{p_1 }{n^{1+\delta}}.
	$
	Each of these (ranked lower than $n_2$) contents is stored at most once across all caches. Therefore, under the Match Least Popular policy, a request for Content $i$ for $i > n_2$ will remain unmatched only if the cache storing that content is already matched to $a$ requests for Content $i$ for some $i > n_2$. Since each cache stores at most $a{k}$ contents, the cumulative popularity of all contents less popular than Content $n_2$ stored on a cache is $< a{k}p_{n_2}$. Each unmatched request a Content $i$ for $i > n_2$ corresponds to the event that there are at least $a+1$ requests for the $a{k}$ contents less popular than Content $n_2$ stored on a cache. Therefore, by the Chernoff bound (Lemma \ref{lemma:chernoff}), the probability that a particular request for a Content $i$ for $i > n_2$ remains unmatched is $\leq \PP(\widetilde{X}>a)\leq \bigg[{\dfrac{e}{n^{(1-\epsilon)\beta}}}\bigg]^{a}.$
	Where, $\widetilde{X}$ is the sum of ${r}$ negatively associated Bernoulli random variables with probability of success equal to $a{k}p_{n_2}$.  By the union bound, the probability that at least one request for Content $i \in R$ such that $i >n_2$ is not matched by the Match Least Popular policy is $\leq m \bigg[{\dfrac{e}{n^{(1-\epsilon)\beta}}}\bigg]^{a}$.}
	
	Next, we focus on contents ranked between $2$ and $n_2$. Note that, if the Knapsack Storage policy decides to store Content $i$, it stores it on $aw_i$ caches. 
	\begin{align} \label{eq:memoryn2}
	\sum_{i=2}^{n_2} x_ia w_i &\leq  \sum_{i=2}^{n_1} \bigg\lceil \bigg(1 + \dfrac{p_1}{2}\bigg) mp_i \bigg\rceil + \sum_{i=n_1+1}^{n_2} \lceil 4 p_1 (\log n)^2 \rceil \nonumber \\
	&\leq \big(1-\frac{p_1}{2}\big)m.
	\end{align}
	
	Therefore, if contents are stored according to Knapsack Storage: Part 2, each cache stores at most {one part of contents with index $i$ such that $2 \leq i \leq n_2$}. {We first focus on the contents ranked between $n_1$ and $n_2$. Let $D_{1,i}$ be the set of caches storing {parts of} Content $i$ for $n_1 \leq i \leq n_2$. Each content part is stored on $\big\lceil\frac{4 p_1 (\log n)^2}{a}\big\rceil$ caches. Let $E_{3,i}$ be the total number of requests from contents whose index is $>n_2$ and which are stored on Caches belonging to $D_{1,i}$. Hence,
		$\EE[E_{3,i}]\leq {r}\bigg\lceil\frac{4 p_1 (\log n)^2}{a}\bigg\rceil a{k}p_{n_2}.$ Therefore,
		$
		\PP(E_{3,i}\geq {2 p_1 (\log n)^2}) \leq \OO\bigg(\Big(\frac{1}{n^{\delta}}\Big)^{(\log n)^2}\bigg).
		$
		Hence, from Lemma \ref{lemma:Zipf_d}, the probability that Content $i$ for $n_1<i \leq n_2$, $i \in R$ is not served $\leq \OO\big({n^{-\delta{(\log n)^2}}}\big).$ By the union bound, the probability that contents belong to $R$ and ranked between $n_1$ and $n_2$ unmatched  to copies of the caches is $\leq \OO\big({n^{-\delta{(\log n)^2} +1}}\big)=\OO\big({e^{-3\ln m}}\big).$
		
		We next focus on the contents ranked between $2$ and $n_1$. Let $D_{2,i}$ be the set of caches storing {parts of} Content $i$ for $2 \leq i \leq n_1$. Each content part is stored on $\big\lceil\frac{(1+\frac{p_1}{2})mp_i}{a}\big\rceil$. Let $E_{4,i}$ be the total number of requests from contents whose index is $>n_2$ and which are stored in caches belong to $D_{2,i}$. Hence,
		$\EE[E_{4,i}]\leq {r}\bigg\lceil\frac{(1+\frac{p_1}{2})mp_i}{a}\bigg\rceil a{k}p_{n_2}.$ Therefore,
		$
		\PP(E_{4,i}\geq \frac{p_1}{4}mp_i)  \leq \OO\Big(\big(\frac{1}{n^{\delta}}\big)^{mp_i}\Big) \leq \OO\Big(\big(\frac{1}{n^{\delta}}\big)^{(\log n)^2}\Big).
		$
		Hence, from Lemma \ref{lemma:Zipf_d}, the probability that Content $i$  for $2\leq i \leq n_1$, $i \in R$ is not served $\leq \OO\big({n^{-\delta{(\log n)^2}}}\big).$ By the union bound, the probability that contents belong to $R$ and ranked between $2$ and $n_1$ unmatched is $\leq \OO\big({n^{-\delta{(\log n)^2} +1}}\big)=\OO\big({e^{-3\ln m}}\big).$
	}
	
}
	Finally, We focus on the requests for Content 1. Recall that if the Knapsack Storage policy decides to cache Content 1, it is stored on  $\frac{m}{a}$ caches. Since the total number of requests in a batch is ${r}$, even if all requests  for contents ranked lower than 1 are matched to caches, the remaining caches can be used to serve all the requests for Content 1.
\end{proof}

The next lemma evaluates the performance of the Knapsack Store + Match Least Popular (KS+MLP) policy for the case where content popularity follows the Zipf distribution. 

\begin{lemma}
	\label{lemma:performance:our_policy}
	Consider a distributed cache consisting of a central server and $m$ caches that offers a catalog of $n$ contents. Let a batch of ${r}$ requests arrive in each time-slot and $R_{\text{KS+MLP}}$ be the transmission rate for the KS+MLP policy when content popularity follows the Zipf distribution with Zipf parameter $\beta>1$. 
	Then, we have that, for $n$ large enough,	
	$
		\EE[R_{\text{KS+MLP}}] \leq \sum_{i \notin R} 1-\bigg(1-\frac{p_1}{i^{\beta}}\bigg)^{{r}} + \OO(n^2 e^{-3\log n}),
$
	where $p_1 = \big(\sum_{i=1}^n i^{-\beta}\big)^{-1}$, $\mathcal{R} = \{i: x_i = 1\}$, such that $x_i$ is the solution of the fraction Knapsack problem solved in Knapsack Storage: Part 1.
\end{lemma}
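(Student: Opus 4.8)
The plan is to split the expectation according to the event $E_2$ of Lemma \ref{lemma:serve_all_requests}, namely that the Match Least Popular policy matches every request for every content in $\mathcal{R}$ to a cache. First I would record the structural fact that the set $S$ produced by Knapsack Storage: Part 1 consists of $\lfloor x_i\rfloor w_i$ copies of Content $i$; since $\lfloor x_i\rfloor=1$ exactly when $x_i=1$ and is $0$ otherwise, a content is placed on at least one cache precisely when $i\in\mathcal{R}$. Hence any content with $i\notin\mathcal{R}$ lives on no cache, so if it is requested in the batch the central server must transmit it — and it does so at most once, regardless of how many users requested it.

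On $E_2$, every request for a content in $\mathcal{R}$ is served by the caches, so the server transmits exactly those contents outside $\mathcal{R}$ that are requested at least once; thus on $E_2$,
\[
R_{\text{KS+MLP}} \;=\; \sum_{i \notin \mathcal{R}} \mathbbm{1}\{\text{Content } i \text{ requested at least once}\}.
\]
Taking expectations and using that each of the $r$ requests independently equals Content $i$ with probability $p_i = p_1/i^\beta$, this term contributes at most $\sum_{i \notin \mathcal{R}}\big(1-(1-p_1/i^\beta)^{r}\big)$.

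On the complement $E_2^c$ I would use the crude bound $R_{\text{KS+MLP}} \leq n$ (at most $n$ distinct files can ever be transmitted). Lemma \ref{lemma:serve_all_requests} gives $\PP(E_2^c) = \OO(m e^{-3\log m})$, and since $m = n/c$ with $c$ a constant we have $e^{-3\log m} = \Theta(e^{-3\log n})$; therefore the contribution of $E_2^c$ to $\EE[R_{\text{KS+MLP}}]$ is $\OO(n m e^{-3\log m}) = \OO(n^2 e^{-3\log n})$. Adding the two contributions yields the claimed inequality.

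I expect the only delicate point to be the first step: being sure that on $E_2$ no content outside $\mathcal{R}$ is even partially served by a cache, so that the server rate is exactly the number of requested, unstored contents. This is immediate from the form of $S$, but it deserves to be stated explicitly; everything after that is a routine conditioning plus union-bound argument together with the $\log m$-versus-$\log n$ bookkeeping.
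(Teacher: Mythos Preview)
Your proposal is correct and follows essentially the same argument as the paper: condition on the event $E_2$ from Lemma~\ref{lemma:serve_all_requests}, bound the rate on $E_2$ by the expected number of unstored contents requested at least once, and use a crude deterministic bound on $E_2^c$. The only cosmetic difference is that the paper uses $m$ (since at most $r<m$ distinct files can be requested) rather than your $n$ as the crude bound on $E_2^c$; because $n=cm$ these give the same $\OO(n^2 e^{-3\log n})$ error term.
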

\begin{proof}
	From Lemma \ref{lemma:serve_all_requests}, we know that, for $n$ large enough, with probability $\geq 1-\OO\big(me^{-3\log m}\big)$, all requests for the contents cached by the KS+MLP policy are matched to caches. Let $\tilde{n}$ be the number of contents not in $\mathcal{R}$ (i.e., not cached by the KS+MLP policy) that are requested at least once in a given time-slot. Therefore, 
	$\EE[\tilde{n}] = \sum_{i \notin \mathcal{R}} 1-(1-p_1)^{{r}}, \text{ and, }$
	$\text{ }\hspace{0.5in} \EE[R_{\text{KS+MLP}}] \leq \EE[\tilde{n}] P(E_2) + m (1-P(E_2))$\\ $\text{ }\hspace{1.16in}\leq\EE[\tilde{n}] + \OO(m^2 e^{-3\log m}). $
\end{proof}

\begin{proof} (Proof of Theorem \ref{thm:zipf12u}) Let $R_{z_2}$ denote the number of files that are not stored by the KS+MLP policy and are requested at least once.\\  
	\textbf{Case 1:} $c-k>\Theta(1)$: From Lemma \ref{lemma:serve_all_requests}, if we store $w_i$ times Content $i$ according to the Knapsack Storage Policy: Part 2, all the requests for it are served with high probability. Let $R$ be the transmission rate of the policy which stores from File 2 onwards, each file $w_i$ times according to Knapsack Storage Policy: Part 2  until memory is full. From equation (\ref{eq:memoryn2}), we  store more than $\frac{p_1'}{2}m$ files $w_i$ times. From the definition of fractional Knapsack problem, $\EE[R_{z_2}] \leq \EE[R]$.\\
	$\text{ }\hspace{0.25in}\mathbb{E}[R]\leq 1+ \int_{\big(c-\frac{p_1'}{2}\big)m}^{cm}1-(1-p_i)^{{r}} di\approx \OO\Big(n^{2-\beta}\Big).$
	
	{	
	\textbf{Case 2:} $c=k$, $a=m^\gamma$: Let the Knapsack solution be store files from $i_{min}+1$ to $i_{max}$. 
	
	\begin{align}
\therefore \int_{i_{\min}}^{\frac{m^{\frac{1}{\beta}}}{(\log m)^\frac{2}{\beta}}}\bigg( \Big(1 + \frac{p_1}{2}\Big) \frac{{r}p_i}{a} \bigg) di+\bigg(  \frac{4 p_1(\log m)^2}{a} \bigg) n^{\frac{1+\delta}{\beta}}\nonumber\\
+\Big\lceil\frac{4}{a\delta}\Big\rceil i_{\max} \geq mk.\nonumber
	\end{align}
	
	\begin{align}
	\implies \Big\lceil\frac{4}{a\delta}\Big\rceil i_{\max}\geq m&k- \frac{4p_1\log m}{a}n^{\frac{1+\delta}{\beta}}+\Big(1+\frac{p_1}{2}\Big)\times\nonumber\\
&	\frac{mp_1}{a(\beta-1)}\Bigg[\Bigg(\frac{m^{\frac{1}{\beta}}}{(\log m)^\frac{2}{\beta}}\Bigg)^{1-\beta}-i_{\min}^{(1-\beta)}\Bigg]. \nonumber
	\end{align}
	Let $i_{\min}=m^{\alpha}$ for some $\alpha<\frac{1-\gamma}{\beta}$, and substitute it in the above equation, we get
	$i_{\max}=c_1n(1-\oo(1))$ and $\frac{i_{\max}}{n}=1-c_2\frac{m^{-\alpha(\beta-1)}}{a}(1-\oo(1))$ for some $c_1>0,$ $c_2>0$. From Lemma \ref{lemma:performance:our_policy},
		\begin{align*}
		\mathbb{E}[R_{z_2}]  &\leq i_{\min}+\int_{i_{\max}}^{n} 1-(1-p_i)^{{r}} di +\oo(1)\nonumber\\
		&= m^{\alpha}+\OO\Big(m^{2-\beta-\alpha(\beta-1)-\gamma}\Big).
	\end{align*} 
By taking  $\alpha=\frac{2-\beta-\gamma}{\beta}$, we will get
$
	\mathbb{E}[R_{z_2}]=\OO\Big(n^\frac{2-\beta-\gamma}{\beta}\Big).
$
	
{\textbf{Case 3:} {$k-c>\Theta(1)$}: Let $mk=n+lm$.
	If we store Contents $t$ to $n$, the total memory required is less than 
	\begin{align}\label{eqn:finalmemory}
	&\int_{t}^{\frac{m^{\frac{1}{\beta}}}{(\log m)^\frac{2}{\beta}}}\bigg( \Big(1 + \frac{p_1}{2}\Big) \frac{{r}p_i}{a} \bigg) di+\bigg(  \frac{4 p_1(\log m)^2}{a} \bigg) n^{1-\epsilon}+\nonumber \\
	&\Big\lceil\frac{4}{a\delta}\Big\rceil n 
	\leq \Big(1+\frac{p_1}{2}\Big)\frac{mp_1}{a(\beta-1)}\Bigg[t^{(1-\beta)}-\Bigg(\frac{m^{\frac{1}{\beta}}}{(\log m)^\frac{2}{\beta}}\Bigg)^{1-\beta}\Bigg] \nonumber \\
	&+\bigg(  \frac{4 p_1(\log m)^2}{a} \bigg) n^{1-\epsilon}+\Big\lceil\frac{4}{a\delta}\Big\rceil n.\nonumber
	\end{align}
	For $l\geq \big\lceil\frac{4}{a\delta}\big\rceil$, $\exists t$, such that the total memory required  is less than $mk$. Therefore,
	$\EE[R_{z_2}]\leq t=\Theta(1).$}	
}	
\end{proof}

		\bibliographystyle{IEEEtran}
	\bibliography{myref2}
	
	


\end{document}